\documentclass[12pt]{iopart}

\usepackage{iopams}  


\usepackage[english]{babel}

\usepackage{mathrsfs}

\expandafter\let\csname equation*\endcsname\relax
\expandafter\let\csname endequation*\endcsname\relax
\usepackage{amsmath}

\usepackage{mathabx}

\usepackage{amsthm}

\usepackage{amsfonts}

\usepackage{BOONDOX-cal}

\usepackage{graphicx}


\newtheorem{theorem}{Theorem}[section] 
\newtheorem{lemma}[theorem]{Lemma}     



\newcommand{\spacetimeMfd}{\mathcal{M}}

\newcommand{\spatialSubMfd}{\Sigma}

\newcommand{\graphSymmetries}{\mathrm{GS}}

\newcommand{\diff}{\mathrm{Diff}}

\newcommand{\tdiff}{\mathrm{TDiff}}

\newcommand{\sym}{\mathrm{S}}

\newcommand{\diffState}[1]{\left|#1\right)}

\newcommand{\dualDiffState}[1]{\left(#1\right|}

\newcommand{\ket}[1]{\left|#1\right>}


\newcommand{\innerproduct}[2]{\left<\,#1 \, , \, #2 \,\right>}

\newcommand{\norm}[1]{||#1||}

\newcommand{\fockstate}[1]{\diffState{\,\mathfrak{#1}\,}}

\newcommand{\fockSpaceElement}[1]{\mathfrak{#1}}

\newcommand{\fockProduct}[2]{\left(\,\mathfrak{#1}\, ,  \,\mathfrak{#2}\,\right)}

\newcommand{\fockOperator}[1]{\,\mathfrak{#1}}

\newcommand{\dualFockState}[1]{\left(\,\mathfrak{#1}\,\right|}

\newcommand{\fockSpace}{\mathfrak{F}}


\newcommand{\hilbertSpaceSub}[1]{\mathscr{H}_{#1}'}

\newcommand{\hDiff}{\mathscr{H}_{\mathrm{diff}}}



\newcommand{\normsquared}[1]{\norm{#1}^{2}}

\newcommand{\expectation}[1]{\left< #1 \right>}

\newcommand{\fockIsoMap}{\mathfrak{T}}




\newcommand{\creationOperator}{\fockOperator{a}^{\dagger}}

\newcommand{\annihilationOperator}{\fockOperator{a}}


\newcommand{\fiducialState}{\fockstate{0}}






\newcommand{\snf}{\mathcal{S}}


\newcommand{\Comp}[1]{\mathrm{Comp}\left(#1\right)}

\newcommand{\curlybrackets}[1]{\left\lbrace #1 \right\rbrace}

\counterwithin{equation}{section}

\newcommand{\quotes}[1]{``#1''}

\newcommand{\startNewParagraph}{\newline\newline\noindent}


\newtheorem{assumption}{Assumption}
\newtheorem{corollaryForDiffeoInnerProduct}{Corollary}

\begin{document}

\title[]{A Fock space structure for the diffeomorphism invariant Hilbert space of loop quantum gravity and its applications}

\author{Hanno Sahlmann $^1$, Waleed Sherif $^2$}

\address{Institute for Quantum Gravity, Department of Physics, Friedrich-Alexander-Universit\"{a}t Erlangen-N\"{u}rnberg (FAU), Staudtstraße 7, 91058 Erlangen, Germany}
\ead{$^1$ hanno.sahlmann@gravity.fau.de, $^2$ waleed.sherif@gravity.fau.de}
\vspace{10pt}
\begin{indented}
\item[]February 7, 2023
\end{indented}

\begin{abstract}
Loop quantum gravity (LQG) is a quantization program for gravity based on the principles of QFT and general covariance of general relativity. Quantum states of LQG describe gravitational excitations based on graphs embedded in a spatial slice of spacetime. We show that, under certain assumptions on the class of diffeomorphisms, the space of diffeomorphism invariant states carries a Fock space structure. The role of one-particle excitations for this structure is played by the diffeomorphism invariant states based on graphs with a single (linked) component. This means, however, that a lot of the structure of the diffeomorphism invariant Hilbert space remains unresolved by this structure. We show how the Fock structure allows to write at least some condensate states of group field theory as diffeomorphism invariant coherent states of LQG in a precise sense. We also show how to construct other interesting states using this Fock structure. We finally explore the quantum geometry of single- and multi-particle states and tentatively observe some resemblance to geometries with a single or multiple components, respectively. 
\end{abstract}

%
%
%
%
%

\section{Introduction}
A fully successful and consistent quantum theory of gravity is still elusive. 
Several theories have been proposed \cite{Oriti:2009zz}, each with varying degrees of results as well as viewpoints and technical issues. Loop quantum gravity (LQG) is a pragmatic quantization program for gravity based on the principles of QFT and general covariance of general relativity. 
It offers an array of substantial results that stand on a solid mathematical foundation \cite{Ashtekar:2004eh,Thiemann:2007pyv,Rovelli:2011eq}.
\startNewParagraph
The quantum states of LQG are unusual, describing gravitational excitations based on graphs embedded in a spatial slice of spacetime. Very roughly speaking, meeting points (vertices) of these excitations create spatial volume while the excitations themselves are flux tubes of spatial area. In group field theory (GFT) a formalism has been developed in which vertex states are the single particle states of a multi-particle theory \cite{Freidel:2005qe,Oriti:2013aqa}, with many particle condensates describing, for example, cosmology \cite{Gielen:2013kla,Gielen:2013naa,Marchetti:2020umh,Jercher:2021bie}. The states of GFT are organized into a Fock space, and are very close to those of LQG, but there are some slight technical differences, mainly regarding the embedding of the networks \cite{Oriti:2013aqa,Oriti:2014yla,Oriti:2014uga,Oriti:2017ave}. 
\startNewParagraph
Connections to a multi-particle picture and Fock spaces has also been made along other lines. It has been shown that the Fock spaces for free field theory can be cast in the mathematical framework of LQG \cite{Varadarajan:1999it,Varadarajan:2001nm,Varadarajan:2002ht,Thiemann:2002vj}, and more recently \cite{Assanioussi:2022rkf}. 
\startNewParagraph
Moreover, certain subspaces of the 
the Hilbert space of \emph{partial solutions} to the vector constraint $\mathscr{H}^{\mathrm{vtx}}$ \cite{Lewandowski:2014hza}, can be given the structure of Fock spaces \cite{Assanioussi:2018zit,Assanioussi:2020fsz}. More precisely, 
it is shown that one can decompose $\mathscr{H}^{\mathrm{vtx}}$ into separable subspaces, motivated by the action of the Hamiltonian constraint with a specifically chosen regularization. These subspaces are defined over \textit{ancestor graphs}, graphs which satisfy certain criteria and from which the remaining graphs can be generated (see \cite{Assanioussi:2018zit,Assanioussi:2020fsz} for details). In the language of multi-particle systems, the \quotes{particles} created in this case would be local excitations at the vertices of these graphs and the multi-particle system is one which is a direct sum of Fock spaces over several ancestor graphs.
\startNewParagraph
Finally, it has also been shown that when quantizing a scalar field with the methods of LQG, the diffeomorphism invariant Hilbert space $\hDiff$ takes the form of a Fock space \cite{Sahlmann:2006qs}. 
\startNewParagraph
In the present work, we contribute to the multi-particle picture for LQG. 
We show that under certain assumptions on the category of manifolds used, the space of diffeomorphism invariant states $\hDiff$
carries a Fock space structure. The role of one-particle excitations for this structure is played by the diffeomorphism invariant states based on graphs consisting of a \emph{single component}\footnote{We should caution the reader that by single component graphs, we mean both, graphs $\gamma$ with trivial  zeroth homotopy group $\pi_0(\gamma)$ and graphs with several topological components that are linked, however.}. States based on multiple unlinked components furnish the multi-particle states. The result is mathematically rigorous, it uses the precise definition of $\hDiff$ \cite{Ashtekar:1995zh,Ashtekar:2004eh}. Therefore, the precise nature of the category of manifolds used is relevant for the result. Since several categories (for example: analytic \cite{Ashtekar:1993wf,Ashtekar:1995zh}, semianalytic \cite{Lewandowski:2005jk}, stratified \cite{Fleischhack:2004jc}, smooth \cite{Baez:1995zx,Lewandowski:1999qr,Fleischhack:2003vk}) have been used in LQG, we have chosen to base the analysis on a number of assumptions about the ability of diffeomorphisms to realize certain maps between graphs in a local fashion. We have no doubt that they are true at least in the semianalytic category, but we have given no formal proof. 
\startNewParagraph
We should emphasize however, that in basing the Fock structure on single component graphs, it necessarily leaves a lot of the structure of the the diffeomorphism invariant Hilbert space unresolved. The structure of a single graph component can be arbitrarily complex, and we do nothing to reduce this complexity. Inside a single component, methods such as \cite{Assanioussi:2018zit,Assanioussi:2020fsz} should be used to understand the structure further.
\startNewParagraph
We show that our result allows to write at least some of the interesting multi-particle states of GFT as diffeomorphism invariant states of LQG in a precise sense. We also demonstrate how to construct other interesting states using the new Fock structure. The group averaging procedure used in the definition of $\hDiff$ is quite technical, and thus a result that gives a more direct picture of aspects of the Hilbert space structure is useful. We also explore the quantum geometry of single- and multi-particle states, by considering generic foliations of the spatial manifold and their quantum geometric properties. By comparing these with the geometric properties of such foliations in classical Riemannian manifolds, we collect some evidence that single component graphs correspond to single component Riemannian manifolds, whereas  states based on  graphs with multiple components not linked to each other correspond to Riemannian manifolds with multiple components. 
\startNewParagraph
In more detail, the content of the paper is as follows:
In section \ref{section_isomorphismProof}, the isomorphism between the Fock space $\fockSpace$ and $\hDiff$ is explicitly derived, preceded by the introduction of some necessary mathematical tools. We begin pragmatically by first setting the mathematical stage in terms of defining the commonly used notation, terminology and spaces. Next, we decompose the group of graph symmetries $\graphSymmetries$ over an arbitrary graph and study the symmetries of the arising subgroup structures and the decomposition of its Haar measure. Utilizing these results, the diffeomorphism invariant inner product on $\hDiff$ is decomposed into its component form (that is, a decomposition into a direct sum of Hilbert spaces over the number of mutually unlinked components of the graph). Subsequently, we construct the component Fock space $\fockSpace$ and an explicit isomorphism between $\hDiff$ and $\fockSpace$ is then established.
\startNewParagraph
Section \ref{section_applications} concerns the physical applications of the states defined on $\fockSpace$. We begin by demonstrating that one can obtain coherent states with interesting properties which would rather be difficult or not possible to obtain using the standard mathematical language of LQG. Following that, we relate the results obtained with the work conducted in the group field theory approach to quantum gravity \cite{Freidel:2005qe,Oriti:2013aqa}, specifically, in relation to the work on condensates \cite{Gielen:2013kla,Gielen:2013naa,Marchetti:2020umh,Jercher:2021bie}. Lastly, we conclude by providing a geometric interpretation of the general states defined on $\fockSpace$. This is done by presenting two case studies in which we qualitatively look at two candidates for two geometric observables, the area and volume operators, where the geometry one obtains using this Fock structure can be explored.

\section{A Fock structure for the diffeomorphism invariant Hilbert space of loop quantum gravity}
\label{section_isomorphismProof}
In LQG, the category of smoothness for the spacetime and its diffeomorphisms plays a role in some technical developments. Several proposals for this category have been made, for example: analytic \cite{Ashtekar:1993wf,Ashtekar:1995zh}, semianalytic \cite{Lewandowski:2005jk}, stratified \cite{Fleischhack:2004jc}, smooth \cite{Baez:1995zx,Lewandowski:1999qr,Fleischhack:2003vk}. Rather than working with a particular one, we will make some assumption about the class of diffeomorphisms in the following (see Assumption \ref{AssumptionsForTheFockSpaceIsomorphism} below). States of the gravitational field contain one-dimensional excitations contained in a spatial hypersurface $\spatialSubMfd$. To keep things non-technical we assume that topologically $\spatialSubMfd\simeq \mathbb{R}^3$. 
\startNewParagraph
We introduce some important notions and notations. In the following, we denote by $\Gamma$ the set of all graphs. Furthermore, for $\gamma \in \Gamma$, we denote by $\Comp{\gamma}$ the set of \emph{unlinked components} of $\gamma$. Here, by an unlinked component of $\gamma$, we mean one or several components of $\gamma$ that are not linked to the rest of the graph, but are linked with each other. We call two parts of the graph unlinked if we can find diffeomorphsims that fix the one part while moving the other arbitrarily far away in terms of some fiducial flat metric on $\mathbb{R}^3$. 
The set of all graphs composed of $n$ unlinked components is denoted
\begin{equation}
   \Gamma_{n} := \left\lbrace \gamma \in \Gamma | |\Comp{\gamma}| = n \in \mathbb{N}_{0} \right\rbrace.
\end{equation}
We now introduce the notation necessary for the definition of $\hDiff$ following \cite{Ashtekar:2004eh}. 
Denote by $\mathrm{Diff}_{\gamma} \subseteq \mathrm{Diff}_{\spatialSubMfd}$ the group of diffeomorphisms which map the graph $\gamma$ onto itself. Furthermore, denote by $\mathrm{TDiff}_{\gamma}$ the group of trivial diffeomorphisms of $\gamma$ which maps it onto itself while preserving each edge $e \in E(\gamma)$ of the graph and its orientation. Lastly,  $\graphSymmetries_{\gamma} = \mathrm{Diff}_{\gamma}/\mathrm{TDiff}_{\gamma}$ denotes the group of graph symmetries of $\gamma$ \cite{Ashtekar:2004eh}. 
\startNewParagraph
States of LQG are defined in terms of holonomies $h$ of an $\mathrm{SU}(2)$ connection $A$ along edges of graphs. Functionals $f[A] \in \mathrm{Cyl_{\gamma}^{\infty}}$ are called smooth cylindrical functions with respect to $\gamma$ if there exists a smooth function $f:\mathrm{SU}(2)^{|E(\gamma)|} \rightarrow \mathbb{C}$ such that $f[A] = f(h_{e_1}[A], \cdots , h_{e_{|E(\gamma)|}}[A])$ \cite{Ashtekar:1993wf}. The space $\mathrm{Cyl}_{\gamma}^{\infty}$ has a natural inner product with respect to a Haar measure $d\mu_{H}^{0}$ \cite{Ashtekar:1994mh}.
Its Cauchy completion with respect to the Ashtekar-Lewandowski measure gives a Hilbert space, here denoted by $\hilbertSpaceSub{\gamma}$, which is isomorphic to the uniform measure space $L^{2}(\Bar{\mathcal{A}}, d\mu_{\gamma}^{0})$, where $\Bar{\mathcal{A}}$ is the compact Hausdorff space of generalized connections \cite{Ashtekar:1994mh}. Therefore, one constructs a Hilbert space $\mathscr{H}$ over all graphs such that
\begin{equation}
    \mathscr{H} = \bigoplus_{\gamma \in \Gamma} \hilbertSpaceSub{\gamma}.
\end{equation}
Here, the space $\hilbertSpaceSub{\gamma}$ is a space spanned by gauge  invariant spin networks over $\gamma$ which fulfill some additional conditions  (no trivial two-valent vertices, no trivial representations) \cite{Ashtekar:2004eh}. 
Since the inner product provides orthogonality between the spaces constructed on any two different graphs $\gamma \in \Gamma_{n}$, $\gamma' \in \Gamma_{m}$ where $n \neq m$, then
\begin{equation}
\label{eqn_defHdiff}
    \mathscr{H} = \bigoplus_{n = 0}^{\infty}\bigoplus_{\gamma \in \Gamma_{n}}\hilbertSpaceSub{\gamma} =: \bigoplus_{n = 0}^{\infty}\hilbertSpaceSub{n}.
\end{equation}
To obtain diffeomorphism invariant Hilbert spaces, and thus the space of solutions to the diffeomorphism constraint of LQG, one needs to group average the cylindrical functions with respect to the diffeomorphisms of the graph. This is done in two steps. First, for any given graph $\gamma$, one averages with respect to $\graphSymmetries_{\gamma}$, hence projecting elements of $\hilbertSpaceSub{\gamma}$ onto its subspace which is invariant under the action of $\graphSymmetries_{\gamma}$. This projection has a natural extension to all graphs. Next, we average with respect to all diffeomorphisms which move the graph. Since the group $\mathrm{Diff}_{\spatialSubMfd}$ is non-compact, this immediately leads to an issue of that the states are now no longer normalizable. Consequently, one works in the dual space of the cylindrical functions rather than the Hilbert space itself and, using a rigging map $\eta$, one assigns to every $\Psi_{\gamma} \in \hilbertSpaceSub{\gamma}$ an element $\dualDiffState{\eta(\Psi_{\gamma})} \in \mathrm{Cyl}^{\infty^{\star}}$ \footnote{Here, $\mathrm{Cyl}^{\infty^{\star}}$ is the dual space of $\mathrm{Cyl^{\infty}} := \bigcup_{\gamma \in \Gamma} \mathrm{Cyl}_{\gamma}^{\infty}$} and define it by its action on $\ket{\Phi_{\widetilde{\gamma}}} \in \mathrm{Cyl}^{\infty}$ such that (for details see \cite{Ashtekar:2004eh,Thiemann:2007pyv})
\begin{equation}
\label{eqn_diffeoInvariantFunctional}
    \left(\eta(\Psi_{\gamma})| \Phi_{\widetilde{\gamma}}\right> := \frac{1}{|\graphSymmetries_{\gamma}|}\sum_{\varrho\in\mathrm{Diff}/\mathrm{Diff}_{\gamma}}\sum_{\varphi\in\graphSymmetries_{\gamma}} \innerproduct{\varrho^{*}\varphi^{*}\Psi_{\gamma}}{\Phi_{\widetilde{\gamma}}}_{\mathscr{H}}.
\end{equation}
Using the group averaging map $\eta$ one can now turn its image into a Hilbert space by defining
\begin{equation}
\label{eqn_diffeoInvariantInnerProduct}
    \left(\eta(\Psi) | \eta(\Phi)\right) = \left(\eta(\Psi) | \Phi\right>.
\end{equation}
Since graphs with different components can never be diffeomorphic, the total diffeomorphism invariant Hilbert space is then
\begin{equation}
\label{difHilSpace}
    \hDiff := \eta(\mathscr{H}) = \bigoplus_{n = 0}^{\infty}\eta(\hilbertSpaceSub{n}).
\end{equation}
The aim now is to investigate the inner product (\ref{eqn_diffeoInvariantInnerProduct}) defined on $\hDiff$ and to reduce it into its simplest form. Systematically, we start with the group averaging procedure. In doing so, we will study the structure of the group of graph symmetries $\graphSymmetries_{\gamma}$ of some $\gamma \in \Gamma_{n}$. We begin by constructing a group homomorphism, denoted by $\rho$, between $\graphSymmetries_{\gamma}$ and $\sym_{n}$ where $\sym_{n}$ is the symmetric group of $n$ elements. Let $I := \left\lbrace 1, 2, \dots , n \right\rbrace \subseteq \mathbb{N}$ be an index set labeling elements of $\Comp{\gamma}$, i.e., we fix a 1-1 map between $I$ and $\Comp{\gamma}$. 
\startNewParagraph
We consider $[g] \in \graphSymmetries_{\gamma}$. Since every $g \in [g]$ is continuous, it must map a component $c_{i} \in \Comp{\gamma}$ to some $c_{j}\in\Comp{\gamma}$, up to permutation of its edges and change of their orientation, where $i, j \in I$. Therefore, we can define a map 
\begin{equation}
    \rho: \graphSymmetries_\gamma \rightarrow \sym_n, \qquad g (c_k) = c_{\rho([g])(k)}, \qquad g \in \diff_\gamma, k \in I. 
\end{equation}
One can easily see that $\rho$ is a group homomorphism, 
\begin{equation}
    \rho([g_{1} \circ g_{2}]) = \rho([g_{1}]) \circ \rho([g_{2}]). 
\end{equation}
We will now use this map to decompose the group of graph symmetries into two subgroups, one consisting of those symmetries which map the graph components onto themselves, and the other consisting of graph symmetries which permute components. One knows that
\begin{equation}
    \mathrm{H}_\gamma:=\ker\rho \equiv \left\lbrace [g] \in \graphSymmetries_\gamma \, | \, \rho ([g]) = \mathbb{I}_{\sym_n} \right\rbrace
\end{equation}
is a normal subgroup of $\graphSymmetries_\gamma$, and that the image $\rho(\graphSymmetries_\gamma)$ is a group isomorphic to 
\begin{equation}
  \widetilde{\graphSymmetries}_\gamma:=\graphSymmetries_\gamma/\mathrm{H}_\gamma.  
\end{equation}
Note that relabeling the components of $\gamma$ changes $\rho$ by a conjugation with an element of $\sym_n$ and does not change the kernel of $\rho$, and hence $\mathrm{H}_\gamma$ is independent of this labeling, and so is $\widetilde{\graphSymmetries}_\gamma$. Using this decomposition, we can decompose sums over group elements as follows. 
\begin{lemma}[Decomposition of the Haar measure]
\label{lemma_decompositionOfGraphSymmetries}
The summation over the elements of the group $\graphSymmetries_\gamma$ (i.e., its Haar measure) can be written in terms of equivalence classes of $\widetilde{\graphSymmetries}_\gamma$ and elements of $\mathrm{H}_\gamma$ as
\begin{equation}
\label{eqn_haar_measure}
        \sum_{g\in\graphSymmetries_\gamma} f(g) =\sum_{[g_i]\in\widetilde{\graphSymmetries}_\gamma} \sum_{h\in\mathrm{H}_\gamma} f(h {g}_i)
\end{equation}
where $\{{g}_i\}$ is a set of representatives, one for each equivalence class in $\widetilde{\graphSymmetries}_\gamma$, and $f$ is an arbitrary function on $\graphSymmetries_{\gamma}$. 
\end{lemma}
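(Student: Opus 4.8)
The plan is to recognize this as the standard left-coset decomposition of a finite group, applied to the normal subgroup $\mathrm{H}_\gamma$ of $\graphSymmetries_\gamma$. Since $\gamma$ has finitely many edges and vertices, $\graphSymmetries_\gamma$ is a finite group, so its Haar measure is simply the counting measure and the sum in question is a finite sum; no analytic subtleties about rearranging the summation arise. I would first record the elementary fact that the left cosets $\curlybrackets{\mathrm{H}_\gamma g : g \in \graphSymmetries_\gamma}$ partition $\graphSymmetries_\gamma$, i.e.\ any two are either equal or disjoint and their union is all of $\graphSymmetries_\gamma$. Because $\mathrm{H}_\gamma = \ker\rho$ is normal, the left and right cosets coincide, so $\widetilde{\graphSymmetries}_\gamma = \graphSymmetries_\gamma/\mathrm{H}_\gamma$ is a well-defined group whose elements are precisely these cosets; selecting one representative $g_i$ per class gives $[g_i] = \mathrm{H}_\gamma g_i$, with the $[g_i]$ exhausting $\widetilde{\graphSymmetries}_\gamma$ exactly once.

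Second, I would split the full sum according to this partition,
\begin{equation}
    \sum_{g \in \graphSymmetries_\gamma} f(g) = \sum_{[g_i] \in \widetilde{\graphSymmetries}_\gamma} \; \sum_{g \in \mathrm{H}_\gamma g_i} f(g),
\end{equation}
which is valid simply because every $g \in \graphSymmetries_\gamma$ appears in exactly one coset $\mathrm{H}_\gamma g_i$. Third, for each fixed representative $g_i$ I would reindex the inner sum using the map $\mathrm{H}_\gamma \to \mathrm{H}_\gamma g_i$, $h \mapsto h g_i$. This map is injective, since right multiplication by $g_i$ has inverse right multiplication by $g_i^{-1}$, and surjective by the very definition of the coset; hence it is a bijection and the inner sum becomes $\sum_{h \in \mathrm{H}_\gamma} f(h g_i)$. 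Substituting this into the displayed equation yields the claimed identity.

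The argument is elementary and I do not expect a substantive obstacle. The only points that merit a word of care are the independence of the right-hand side from the choice of representatives, which is immediate because the coset $\mathrm{H}_\gamma g_i$ does not itself depend on which representative is picked, and the role of normality: while the sum identity alone needs only the partition into left cosets, it is normality that makes indexing by the \emph{group} $\widetilde{\graphSymmetries}_\gamma$ (rather than merely by a set of cosets) meaningful and that removes any left/right ambiguity from the notation $h g_i$, ensuring the statement is consistent with the group-theoretic structure established just above.
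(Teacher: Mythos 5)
Your proof is correct and takes essentially the same route as the paper's: both decompose $\graphSymmetries_\gamma$ into the cosets $\mathrm{H}_\gamma g_i$ indexed by $\widetilde{\graphSymmetries}_\gamma$ and reindex each coset via the bijection $h \mapsto h g_i$. If anything, your write-up is slightly more careful than the paper's, which informally writes the decomposition as a set product $\widetilde{\graphSymmetries}_\gamma \times \mathrm{H}_\gamma$ and misnames the cosets ``conjugacy classes,'' whereas you correctly identify them as cosets and make the bijection, the representative-independence, and the role of normality explicit.
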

\begin{proof}
The group $\graphSymmetries_\gamma$ as a set decomposes into $\widetilde{\graphSymmetries}_\gamma \times \mathrm{H}_\gamma$. Conjugacy classes in $\widetilde{\graphSymmetries}_\gamma$ can be written as $\mathrm{H}_\gamma g_1 , \mathrm{H}_\gamma g_2, \dots$ where $\{{g}_i\}$ is a set of representatives, one for each equivalence class. In terms of these representatives, any $g'\in\graphSymmetries_\gamma$ can hence be written as $g' = h g_i$ for precisely one $g_i$ from the chosen list of representatives. $h$ is then uniquely defined as $h=g'/g_i$. As a result, one can write a sum over the group elements of the graph symmetries over the entire graph as
\begin{equation}
    \sum_{g\in\graphSymmetries_\gamma}  f(g) =  \sum_{[{g}_i]\in\widetilde{\graphSymmetries}_\gamma} \sum_{h\in\mathrm{H}_\gamma} f(hg_i),
\end{equation}
for any choice of the representatives $\{{g}_i\}$ of the conjugacy classes. 
\end{proof}
We do not yet have a very concrete picture of what the groups $\mathrm{H}_\gamma$ and $\widetilde{\graphSymmetries}_{\gamma}$ look like. Their structure obviously depends on $\gamma$, as well as on the groups $\diff_\gamma$ and $\tdiff_\gamma$, which in turn depend on the category of diffeomorphisms in use. To simplify things we will make some assumptions on the size on the diffeomorphism groups in question.  
\begin{assumption}
\label{AssumptionsForTheFockSpaceIsomorphism}
For any $c,c' \in \Gamma_1$, $\gamma \in \Gamma$ which are disjoint, mutually unlinked, with $c,c'$ diffeomorphic to each other:
\begin{itemize}
  \item[\textsc{A1.}] There exists a diffeomorphism $\varphi \in \tdiff_{\gamma}$ such that $\varphi$ maps the edges of $c$ onto the edges of $c'$ up to orientation, and the other way around, which we denote with a slight abuse of notation as 
  \begin{equation}
      \varphi(c) = c', \qquad \varphi(c') = c. 
  \end{equation}

  \item[\textsc{A2.}] For any $\varphi \in \diff_{c}$ we can find $\varphi' \in \diff_{\gamma \cup c}$ such that 
  \begin{equation}
      \varphi'\in \tdiff_\gamma, \qquad \varphi'\circ \varphi^{-1}\in \tdiff_c. 
  \end{equation}
\end{itemize}
\end{assumption}
It is useful to keep track of which of the components of $\gamma$ are diffeomorphic to each other. This partitions $\Comp{\gamma}$ as 
\begin{equation}
    \Comp{\gamma}= r_1\dot{\cup}r_2\dot{\cup}\ldots \dot{\cup} r_k
\end{equation}
where the components in $r_i$ are diffeomorphic to each other but not to any component of $\gamma$ not in $r_i$. Then we have the following.
\begin{lemma}[Symmetries of the subgroup structures of $\graphSymmetries$]
\label{lemma_structure_symmetries}
Under the assumptions \ref{AssumptionsForTheFockSpaceIsomorphism} we have 
\begin{equation}
    \mathrm{H}_\gamma \simeq \bigtimes_{c\in \Comp{\gamma} } \graphSymmetries_{c}\quad\quad \text{and} 
    \quad\quad \widetilde{\graphSymmetries}_\gamma \simeq  \bigtimes_{i=1}^k \sym_{|r_i|}.
\end{equation}
\end{lemma}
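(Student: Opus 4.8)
The plan is to treat the two isomorphisms separately, in each case writing down an explicit map, checking it is a bijective homomorphism, and using the two clauses of Assumption \ref{AssumptionsForTheFockSpaceIsomorphism} to supply exactly the surjectivity that is not automatic. For $\mathrm{H}_\gamma \simeq \bigtimes_{c\in\Comp{\gamma}}\graphSymmetries_c$ I would introduce a \emph{restriction} homomorphism $\Phi\colon \mathrm{H}_\gamma \to \bigtimes_{c}\graphSymmetries_c$ sending $[g]$ to the tuple $([g|_c])_c$ obtained by restricting a representative $g$ to a neighbourhood of each component. Since every element of $\mathrm{H}_\gamma=\ker\rho$ fixes each component setwise, each $g|_c$ lies in $\diff_c$, and I would check that its class in $\graphSymmetries_c=\diff_c/\tdiff_c$ is independent of the representative (two representatives differ by an element of $\tdiff_\gamma$, whose restriction to $c$ lies in $\tdiff_c$) and that $\Phi$ respects composition. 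Injectivity is then immediate: if every $g|_c$ is trivial in $\graphSymmetries_c$, then $g$ preserves every edge of $\gamma$ with its orientation, so $g\in\tdiff_\gamma$ and $[g]$ is the identity.

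The content of the first part lies in surjectivity, and this is precisely what A2 delivers. Given a target tuple, lift each slot to some $\varphi_c\in\diff_c$, and apply A2 with the roles played by the component $c$ and the remaining graph $\gamma\setminus c$ (which is disjoint from and unlinked with $c$) to obtain $\varphi'_c\in\diff_{\gamma}$ that is trivial on $\gamma\setminus c$ and agrees with $\varphi_c$ on $c$ up to $\tdiff_c$. Each $\varphi'_c$ then represents an element of $\mathrm{H}_\gamma$ with $\Phi([\varphi'_c])$ equal to the tuple carrying $[\varphi_c]$ in slot $c$ and the identity elsewhere. Composing the $\varphi'_c$ in any order gives a representative of an element of $\mathrm{H}_\gamma$ whose image under the homomorphism $\Phi$ is the product of these one-slot tuples, i.e.\ the prescribed tuple. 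For the second isomorphism I would use the identification $\widetilde{\graphSymmetries}_\gamma\cong\rho(\graphSymmetries_\gamma)$ already established above and simply compute the image of $\rho$. One inclusion is automatic: a diffeomorphism maps a component to a diffeomorphic one, so the induced permutation cannot mix components in different diffeomorphism blocks $r_i$, giving $\rho(\graphSymmetries_\gamma)\subseteq\bigtimes_{i}\sym_{|r_i|}$ (the Young subgroup of $\sym_n$ permuting labels within each block). For the reverse inclusion it suffices to realise a generating set, and since each $\sym_{|r_i|}$ is generated by transpositions of pairs of diffeomorphic components, I would realise the transposition of any two such $c,c'$ by applying A1 with $\gamma\setminus(c\cup c')$ as the fixed part: this produces $\varphi$ trivial on the remainder but exchanging $c$ and $c'$, hence a graph symmetry with $\rho([\varphi])=(c\,c')$. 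As these transpositions generate $\bigtimes_i\sym_{|r_i|}$, equality follows.

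The main obstacle in both parts is surjectivity, and it is essentially delegated to the assumptions: without A2 there is no guarantee that independently chosen per-component symmetries can be glued into a single diffeomorphism that is simultaneously trivial away from each component, and without A1 there is no guarantee that a bare transposition of two diffeomorphic components can be realised while fixing the remainder of the graph. The residual care is bookkeeping, namely verifying that $\Phi$ is well defined modulo $\tdiff$ and that the various extensions can be localised to disjoint supports so that their composition restricts correctly to each component; I expect this to be routine once the supports are arranged to be disjoint and the homomorphism property of $\Phi$ is used to bypass any question of the $\varphi'_c$ commuting as diffeomorphisms.
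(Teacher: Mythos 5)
Your proof is correct, and although it uses A1 and A2 in exactly the same roles as the paper, it is organized in a genuinely different direction. For the first isomorphism the paper builds the \emph{gluing} map: starting from a tuple $([\varphi_1],\ldots,[\varphi_k])\in\bigtimes_{c\in\Comp{\gamma}}\graphSymmetries_c$ it uses A2 to replace each representative by $\varphi'_i\in\tdiff_{\gamma-c_i}$ and sends the tuple to $[\varphi'_1\circ\cdots\circ\varphi'_k]\in\mathrm{H}_\gamma$, verifying well-definedness and the homomorphism property but leaving injectivity and surjectivity of this map largely implicit. Your restriction map $\Phi$ is the inverse of that gluing map, and its direction buys precisely the bookkeeping the paper skips: injectivity is immediate (triviality in every $\graphSymmetries_c$ forces membership in $\tdiff_\gamma$), and all content is isolated in surjectivity, which A2 supplies slot by slot, with the homomorphism property of $\Phi$ correctly invoked to sidestep any commutation question for the $\varphi'_c$. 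One cosmetic remark: no actual restriction to a neighbourhood is needed, since a representative $g$ of a class in $\mathrm{H}_\gamma$ maps each $c$ onto itself and hence already lies in $\diff_c$, so you may take its class in $\graphSymmetries_c=\diff_c/\tdiff_c$ directly; well-definedness then follows from $\tdiff_\gamma\subseteq\tdiff_c$, as you say. For the second isomorphism your route is tidier than the paper's: the paper constructs a map $[\varphi_{ij}]\mapsto T_{ij}$ into $\sym_{|r_i|}$ and must argue injectivity separately via the informal claim that two inequivalent exchanges of $c_i$ and $c_j$ differ by an element of $\mathrm{H}_\gamma$, whereas your appeal to the first isomorphism theorem ($\mathrm{H}_\gamma=\ker\rho$ by definition, so $\widetilde{\graphSymmetries}_\gamma\cong\rho(\graphSymmetries_\gamma)$, already noted in the paper) makes injectivity automatic and reduces everything to computing the image of $\rho$, which your two inclusions (containment in the Young subgroup because diffeomorphisms preserve the diffeomorphism type of components; A1 realizing all within-block transpositions, which generate $\bigtimes_{i}\sym_{|r_i|}$) establish soundly. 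Net effect: same assumptions in the same places, but your organization makes explicit the bijectivity claims that the paper's proof only gestures at.
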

\begin{proof}
    Start with $g=[\varphi_1]\times\ldots\times[\varphi_k]\in \bigtimes_{i \in I} \graphSymmetries_{c_i}$. Using assumption A2, to each $\varphi_i$ there is $\varphi'_i\in \tdiff_{\gamma-c_i}$ with
    \begin{equation}
        \varphi'_i\circ \varphi_i^{-1}\in \tdiff_{c_i}.
    \end{equation}
    The latter means $[\varphi_i]=[\varphi'_i]$, and so $g=[\varphi'_1]\times\ldots\times[\varphi'_k]$. 
    The first isomorphism of the lemma can then be defined by 
    \begin{equation}
        g\longmapsto [\varphi'_1\circ\ldots\circ\varphi'_k]. 
    \end{equation}
    Each $\varphi'_i$ is only defined up to elements of $\tdiff_\gamma$, but those do not change the equivalence class on the right hand side and hence the map is well defined. It is also a homomorphism: For $\varphi'_i \in \tdiff_{\gamma-c_i}$, and $\xi'_j \in \tdiff_{\gamma-c_j}$ with $i\neq j$ one has $[\varphi'_i\circ \xi'_j]=[\xi'_j\varphi'_i]$ and hence 
    \begin{equation}
    \begin{split}   
        [\varphi_1]\times\ldots\times[\varphi_k] \cdot [\xi_1]\times\ldots\times[\xi_k]
        \longmapsto & [\varphi'_1\circ\xi'_1 \circ \ldots\circ\varphi'_k\circ\xi'_k]\\
        &=[\varphi'_1\circ\ldots\circ\varphi'_k\circ\xi'_1\circ\ldots\circ\xi'_k]\\
        &=[\varphi'_1\circ\ldots\circ\varphi'_k]\cdot[\xi'_1\circ\ldots\circ\xi'_k].
        \end{split}
    \end{equation}
As for the second statement, let $c_i,c_j$ be two components in $r_a$. Since they are diffeomorphic, there is, according to assumption A1 a diffeomorphism $\varphi_{ij}\in \tdiff_{\gamma -\{c_i,c_j\}}$ that interchanges $c_i$ and $c_j$. Therefore we define a map 
\begin{equation}
    \label{eqn_permutationMap}
   \widetilde{\graphSymmetries}_{\gamma}\ni [\varphi_{ij}]\longmapsto  T_{ij} \in \sym_{|r_i|}.
\end{equation}
This is well defined: If $\varphi$ is modified by an element of $\mathrm{H}_\gamma$, it will still swap $c_i$ and $c_j$. 
It can be easily checked that the map \eqref{eqn_permutationMap} respects products, so it is a homomorphism. Moreover, since A1 guarantees the existence  of a map for any two diffeomorphic components, the image of \eqref{eqn_permutationMap} has all transpositions, so it extends to a surjective map. It is also injective: If $c_i$ and $c_j$ are diffeomorphic in two inequivalent ways, one can quickly see that there is an induced graph symmetry in $\mathrm{H}_\gamma$ which can relate the two, so they actually correspond to the same element of $\widetilde{\graphSymmetries}_{\gamma}$. The arguments above can be extended from one set $r_a$ of components to all of them, thus the assertion is proven. 
\end{proof}
\noindent We have now shown that the group of graph symmetries $\graphSymmetries_\gamma$ over some graph $\gamma$ can be described in terms of the two groups shown in the lemma above. This simplifies a detailed study of the group averaging procedure. \startNewParagraph
We will now characterize the diffeomorphism invariant inner product on $\hilbertSpaceSub{n}$ in terms of the $n$ graph components. This is without loss of generality, as sectors with different $n$ are orthogonal to each other, see \eqref{eqn_defHdiff}. 
Therefore we consider two cylindrical functions $\Psi$ and $\Psi' \in \hilbertSpaceSub{\gamma}$. We can expand into products of generalized spin network functions $\snf_i^{I_i}$ over the components $c_i\in \Comp{\gamma}$:
\begin{equation}
\label{eqn_spinNetwork}
    \Psi = \sum_{I_1, \cdots , I_n}c_{I_1 , \cdots , I_n}\snf_1^{I_1} [A] \cdots \snf_n^{I_n} [A], \qquad \Psi' = \sum_{I_1 , \cdots , I_{n}}c'_{I_1 , \cdots , I_{n}} \snf_1^{I_1} [A] \cdots \snf_{n}^{I_{n}} [A]. 
\end{equation}
Here the indices $I$ represent the summation over the generalized spin networks in $\hilbertSpaceSub{c_i}$ and $n = |\Comp{\gamma}|$. For the diffeomorphism invariant inner product we find 
\begin{lemma}[Component decomposition of the inner product on $\hDiff$]
\label{lemma_resolutionOfDiffInvInnerProduct}
The inner product of $\eta(\Psi_\gamma)$ and $\eta(\Psi'_\gamma)$ on $\hDiff$ can be written as
\begin{equation}
    \left(\eta(\Psi) , \eta(\Psi')\right) = \frac{1}{|\widetilde{\graphSymmetries}_\gamma|}\sum_{\widetilde{\zeta} \in \widetilde{\graphSymmetries}_\gamma} \sum_{\left\lbrace I\right\rbrace, \left\lbrace I'\right\rbrace} \overline{c_{\rho (\widetilde{\zeta})(I_1)\dots \rho (\widetilde{\zeta})(I_n)}}c_{\left\lbrace I'\right\rbrace} \prod_{k}^{n} \left( \eta(\snf_{k}^{I_{k}}), \eta(\snf_{k}^{I_{k}'})\right)
\end{equation}
where $k\in \left\lbrace 1, \cdots , n\right\rbrace$ and $I$ are spin network labels.
\end{lemma}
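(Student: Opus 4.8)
The plan is to start from the definition of the diffeomorphism invariant inner product, collapse the outer group average down to a sum over $\graphSymmetries_\gamma$ alone, and then feed in the two structural results, Lemma~\ref{lemma_decompositionOfGraphSymmetries} and Lemma~\ref{lemma_structure_symmetries}, to split this average into an \emph{internal} part that factorizes over components and a \emph{permutation} part that merely reshuffles the expansion coefficients. Concretely, since $\left(\eta(\Psi),\eta(\Psi')\right)=\left(\eta(\Psi)|\Psi'\right>$, I would first insert the defining formula \eqref{eqn_diffeoInvariantFunctional} and observe that the sum over $\varrho\in\diff/\diff_\gamma$ collapses to a single term. Indeed, $\varphi^*\Psi$ is supported on $\gamma$ for every $\varphi\in\graphSymmetries_\gamma$, so $\innerproduct{\varrho^{*}\varphi^{*}\Psi}{\Psi'}_{\mathscr{H}}$ can be nonzero only when $\varrho^{-1}(\gamma)=\gamma$, i.e. when $\varrho$ represents the trivial coset; all other terms vanish because Hilbert spaces built on different graphs are orthogonal, see \eqref{eqn_defHdiff}. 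This leaves
\begin{equation}
\left(\eta(\Psi),\eta(\Psi')\right)=\frac{1}{|\graphSymmetries_\gamma|}\sum_{\varphi\in\graphSymmetries_\gamma}\innerproduct{\varphi^{*}\Psi}{\Psi'}_{\mathscr{H}}.
\end{equation}

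Next I would apply Lemma~\ref{lemma_decompositionOfGraphSymmetries} to rewrite $\sum_{\varphi\in\graphSymmetries_\gamma}$ as $\sum_{[\zeta]\in\widetilde{\graphSymmetries}_\gamma}\sum_{h\in\mathrm{H}_\gamma}$ with $\varphi=h\,\zeta$, together with the factorization of the normalization $|\graphSymmetries_\gamma|=|\widetilde{\graphSymmetries}_\gamma|\,|\mathrm{H}_\gamma|$. I would then substitute the expansions \eqref{eqn_spinNetwork} of $\Psi,\Psi'$ into products of single-component spin networks. The representative $\zeta$ permutes the components according to $\rho([\zeta])\in\sym_n$, so $\zeta^{*}$ relocates the factor $\snf_k^{I_k}$ onto a permuted component; since spin networks supported on distinct components lie in orthogonal tensor factors, $\hilbertSpaceSub{\gamma}\simeq\bigotimes_k\hilbertSpaceSub{c_k}$, the $\mathscr{H}$-inner product factorizes as a product over components once the supports are matched. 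By Lemma~\ref{lemma_structure_symmetries} we have $\mathrm{H}_\gamma\simeq\bigtimes_{c}\graphSymmetries_{c}$, so writing $h=(h_1,\dots,h_n)$ the inner sum factorizes as well, and each factor
\begin{equation}
\frac{1}{|\graphSymmetries_{c_k}|}\sum_{h_k\in\graphSymmetries_{c_k}}\innerproduct{h_k^{*}\snf_k^{I_k}}{\snf_k^{I_k'}}_{\hilbertSpaceSub{c_k}}
\end{equation}
is, by exactly the same collapse of the $\varrho$-sum applied to a single component, the single-component diffeomorphism invariant inner product $\left(\eta(\snf_k^{I_k}),\eta(\snf_k^{I_k'})\right)$. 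Using $|\mathrm{H}_\gamma|=\prod_k|\graphSymmetries_{c_k}|$ then yields the product $\prod_k\left(\eta(\snf_k^{I_k}),\eta(\snf_k^{I_k'})\right)$ appearing in the claim.

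Finally I would absorb the permutation into the coefficients. Relabeling the summation multi-index so that the product always pairs $\snf_k^{I_k}$ with $\snf_k^{I_k'}$ transfers the action of $\rho([\zeta])$ onto the slots of the barred coefficient, producing $\overline{c_{\rho([\zeta])(I_1)\cdots\rho([\zeta])(I_n)}}$ while the remaining sum over $[\zeta]\in\widetilde{\graphSymmetries}_\gamma$ is normalized by $1/|\widetilde{\graphSymmetries}_\gamma|$, which is precisely the asserted expression.

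The step I expect to be the main obstacle is the careful bookkeeping of the permutation: one must verify that $\zeta^{*}$ acts on the product of component spin networks exactly by $\rho([\zeta])$ (using assumption A1 to realize the interchange of two diffeomorphic components as a genuine element of $\tdiff_{\gamma-\{c_i,c_j\}}$), and that the subsequent relabeling of summation indices converts this into the conjugated, index-permuted coefficient without introducing spurious multiplicities. A secondary subtlety is confirming that the $\mathscr{H}$-inner product genuinely factorizes across components—this relies on the tensor-product structure of $\hilbertSpaceSub{\gamma}$ and on the fact that each $h\in\mathrm{H}_\gamma$ preserves every component—together with the observation that only components lying in the same diffeomorphism class $r_i$ can be exchanged, so that the matched single-component inner products are always well defined.
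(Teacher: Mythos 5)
Your proposal is correct and follows essentially the same route as the paper's proof: collapsing the $\varrho$-sum over $\diff/\diff_\gamma$ to the trivial coset, splitting the average over $\graphSymmetries_\gamma$ via Lemma~\ref{lemma_decompositionOfGraphSymmetries} together with the factorization $\mathrm{H}_\gamma \simeq \bigtimes_{c}\graphSymmetries_{c}$ from Lemma~\ref{lemma_structure_symmetries}, transferring $\rho(\widetilde{\zeta})$ onto the coefficient indices, and reassembling the per-component averages into the single-component invariant inner products $\left(\eta(\snf_k^{I_k}),\eta(\snf_k^{I'_k})\right)$. The only difference is immaterial ordering: you dispose of the coset sum at the outset, whereas the paper carries it along and absorbs it into the component-wise group averaging in the final step.
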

\begin{proof}
By definition, the diffeomorphism invariant inner product takes the form
\begin{equation}
    \left(\eta(\Psi) , \eta(\Psi')\right) = \frac{1}{|\graphSymmetries_\gamma|}\sum_{\varphi\in\diff/\diff_\gamma}\sum_{\zeta\in\graphSymmetries_\gamma}\left<\varphi^{*}\zeta^{*}\Psi, \Psi'\right>.
\end{equation}
Following from lemma \ref{lemma_decompositionOfGraphSymmetries}, the averaging over graph symmetries can be split as in \eqref{eqn_haar_measure}, thus
\begin{align}
        \left(\eta(\Psi) , \eta(\Psi')\right) & = \begin{aligned}[t]
        \frac{1}{|\widetilde{\graphSymmetries}_{\gamma}|}\sum_{\varphi\in\diff/\mathrm{Diff_\gamma}}\sum_{\widetilde{\zeta}\in\widetilde{\graphSymmetries}_\gamma}&\left(\prod_{i}^{n}\frac{1}{|\graphSymmetries_{c_{i}}|}\right) \sum_{\zeta^{(i)}\in\mathrm{H}_{c_i}} \times
        \\
        & \times \left<\varphi^{*}{\widetilde{\zeta}}^{*}\left(\prod_{i}^{n}\zeta^{(i)*}\right)\Psi, \Psi'\right>.
        \end{aligned}
        \label{eqn_kinInnerProductTerm}
\end{align}
Since both functions in the kinematical inner product are cylindrical on the same graph, the only non-vanishing contribution is obtained for $\varphi$ in the equivalence class of the identity diffeomorphism. We further find
\begin{align}
\left<{\widetilde{\zeta}}^{*}\left(\prod_{i}^{n}\zeta^{(i)*}\right)\Psi, \Psi' \right> &= \sum_{\left\lbrace I\right\rbrace, \left\lbrace I'\right\rbrace}\overline{c_{\{I\}}} c_{\{I'\}} \left<{\widetilde{\zeta}}^{*}\left(\prod_{i}^{n}\zeta^{(i)*}\right)\prod_{k=1}^{n}\snf_k^{I_k}, \prod_{j=1}^{n}\snf_j^{I'_j}\right> \\
        &= \sum_{\left\lbrace I\right\rbrace, \left\lbrace I'\right\rbrace}\overline{c_{\left\lbrace I \right\rbrace}}c_{\left\lbrace I'\right\rbrace}\left<\widetilde{\zeta}^{*}\prod_{k=1}^{n}\zeta^{(k)*}\snf_k^{I_{k}}, \prod_{j=1}^{n}\snf_j^{I'_j}\right>
            \\
            &= \sum_{\left\lbrace I\right\rbrace, \left\lbrace I'\right\rbrace}\overline{c_{\left\lbrace I \right\rbrace}}c_{\left\lbrace I'\right\rbrace}\left<\prod_{k=1}^{n}\left(\zeta^{\rho (\widetilde{\zeta})(k)}\right)^{*}\snf^{I_k}_{\rho (\widetilde{\zeta})(k)}, \prod_{j=1}^{n}\snf^{I'_j}_j\right>
            \\
            &= \sum_{\left\lbrace I\right\rbrace, \left\lbrace I'\right\rbrace}\overline{c_{\rho (\widetilde{\zeta})(I)}}c_{\left\lbrace I'\right\rbrace}\left<\prod_{k=1}^{n}\zeta^{(k)*}{\snf}^{I_k}_{k}, \prod_{j=1}^{n}\snf_{j}^{I_{j}'}\right>
            \\
            &= \sum_{\left\lbrace I\right\rbrace, \left\lbrace I'\right\rbrace}\overline{c_{\rho (\widetilde{\zeta})(I)}}c_{\left\lbrace I'\right\rbrace}\prod_{k=1}^{n}\left<\zeta^{ (k)*}{\snf}^{I_k}_k, \snf^{I'_k}_k\right>,
\end{align}
where we have introduced the shorthand 
\begin{equation}
    \rho (\widetilde{\zeta})(I)\equiv \rho (\widetilde{\zeta})(I_1 \ldots I_n):= (I_{\rho (\widetilde{\zeta})(1)} \ldots I_{\rho (\widetilde{\zeta})(n)})
\end{equation}
for the reordering action of the permutation $\rho (\widetilde{\zeta})$ on the indices of the spin network coefficients. Therefore, we can write the inner product on $\hDiff$ as
\begin{align}
            \left(\eta(\Psi) , \eta(\Psi')\right) & =
            \begin{aligned}[t]
            \frac{1}{|\widetilde{\graphSymmetries}_\gamma|}\sum_{\varphi\in\diff/\diff_\gamma}\sum_{\widetilde{\zeta} \in \widetilde{\graphSymmetries}_\gamma} \sum_{\left\lbrace I\right\rbrace, \left\lbrace I'\right\rbrace} 
            \overline{c_{\rho (\widetilde{\zeta})(I)}}c_{\left\lbrace I'\right\rbrace} &\prod_{k=1}^{n} \frac{1}{|\graphSymmetries_{c_k}|} \times
            \\
            & \times \sum_{\zeta^{(k)}\in\mathrm{H}_{c_k}}\left<\varphi^*\zeta^{(k)*}{\snf}^{I_k}_k, \snf^{I'_k}_k\right>
            \end{aligned}
            \\
            & = \frac{1}{|\widetilde{\graphSymmetries}_\gamma|}\sum_{\widetilde{\zeta} \in \widetilde{\graphSymmetries}_\gamma} \sum_{\left\lbrace I\right\rbrace, \left\lbrace I'\right\rbrace} \overline{c_{\rho (\widetilde{\zeta})(I)}}c_{\left\lbrace I'\right\rbrace} \prod_{k=1}^{n} \left( \eta(\snf^{I_k}_k), \eta(\snf^{I'_k}_k)\right).
\end{align}
\end{proof}
\noindent We can now generalize this result slightly: For $\gamma \in \Gamma_n$ not diffeomorphic to $\gamma'\in \Gamma_n$ and $\Psi_\gamma \in \hilbertSpaceSub{\gamma}$, $\Psi'_{\gamma'} \in \hilbertSpaceSub{\gamma'}$ we have 
\begin{equation}
    \left(\eta(\Psi_\gamma) , \eta(\Psi'_{\gamma'})\right)= 0
\end{equation}
as an immediate consequence of \eqref{eqn_diffeoInvariantFunctional},\eqref{eqn_diffeoInvariantInnerProduct}. Since $\eta$ is invariant under diffeomorphisms, the case of diffeomorphic but non-equal $\gamma,\gamma'$ can be reduced to that of $\gamma=\gamma'$. Thus we can record the follwoing generalization. 
\begin{corollaryForDiffeoInnerProduct}
    For $\gamma,\gamma' \in \Gamma_n$ and $\Psi_\gamma \in \hilbertSpaceSub{\gamma}$, $\Psi'_{\gamma'} \in \hilbertSpaceSub{\gamma'}$ we have 
\begin{equation}
    \left(\eta(\Psi_\gamma) , \eta(\Psi'_{{\gamma'}})\right) = \frac{1}{|\widetilde{\graphSymmetries}_\gamma|}\sum_{\widetilde{\zeta} \in \widetilde{\graphSymmetries}_\gamma} \sum_{\left\lbrace I\right\rbrace, \left\lbrace I'\right\rbrace} \overline{c_{\rho (\widetilde{\zeta})(I)}}c_{\left\lbrace I'\right\rbrace} \prod_{k=1, j=1}^{n} \left( \eta(\snf_{k}^{I_{k}}), \eta(\widetilde{\snf}_{j}^{I_{j}'})\right).
\end{equation}
where $\widetilde{\snf}$ is a basis for $\hilbertSpaceSub{\gamma'}$ analogous to $\snf$ for $\hilbertSpaceSub{\gamma}$.
\end{corollaryForDiffeoInnerProduct}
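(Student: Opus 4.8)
The plan is to reduce the assertion to Lemma~\ref{lemma_resolutionOfDiffInvInnerProduct} by splitting into two cases according to whether $\gamma$ and $\gamma'$ are diffeomorphic, so that nothing new has to be computed: the equal-graph formula is already furnished by the Lemma, and the two remarks preceding the statement (vanishing for non-diffeomorphic graphs, and diffeomorphism invariance of $\eta$) supply the rest.

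First I would treat the case where $\gamma$ and $\gamma'$ are \emph{not} diffeomorphic and show that both sides vanish. On the left, unfolding \eqref{eqn_diffeoInvariantInnerProduct} and \eqref{eqn_diffeoInvariantFunctional} expresses the inner product as a sum of kinematical overlaps $\innerproduct{\varrho^{*}\varphi^{*}\Psi_\gamma}{\Psi'_{\gamma'}}_{\mathscr{H}}$; since each $\varrho^{*}\varphi^{*}\Psi_\gamma$ is cylindrical on a graph diffeomorphic to $\gamma$, hence never on $\gamma'$, every term dies by orthogonality of the $\hilbertSpaceSub{\bullet}$ on distinct graphs. On the right, each single-component factor $\left(\eta(\snf_k^{I_k}),\eta(\widetilde{\snf}_j^{I'_j})\right)$ is itself a diffeomorphism invariant overlap of states on single components $c_k\in\Comp{\gamma}$ and $c'_j\in\Comp{\gamma'}$, and the $\Gamma_1$ instance of the same orthogonality shows it is zero unless $c_k$ and $c'_j$ are diffeomorphic. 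Because $\gamma\not\simeq\gamma'$, there is no bijection of $\Comp{\gamma}$ onto $\Comp{\gamma'}$ pairing diffeomorphic components---otherwise $\gamma$ and $\gamma'$ would be diffeomorphic---so in each product term at least one factor vanishes and the right-hand side vanishes too.

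Next I would handle the diffeomorphic case by transporting $\Psi'_{\gamma'}$ onto $\gamma$. Choosing $\phi\in\diff_{\spatialSubMfd}$ with $\phi(\gamma)=\gamma'$, its pullback identifies $\hilbertSpaceSub{\gamma}$ and $\hilbertSpaceSub{\gamma'}$ unitarily and carries the basis $\widetilde{\snf}$ to a basis of $\hilbertSpaceSub{\gamma}$. Diffeomorphism invariance of the rigging map gives $\eta(\phi^{*}\Psi'_{\gamma'})=\eta(\Psi'_{\gamma'})$ in $\mathrm{Cyl}^{\infty^{\star}}$, whence $\left(\eta(\Psi_\gamma),\eta(\Psi'_{\gamma'})\right)=\left(\eta(\Psi_\gamma),\eta(\phi^{*}\Psi'_{\gamma'})\right)$. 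The latter is an overlap of two states on the \emph{same} graph $\gamma$, so Lemma~\ref{lemma_resolutionOfDiffInvInnerProduct} applies verbatim; re-expressing the single-component factors back in the basis $\widetilde{\snf}$ via $\left(\eta(\snf_k^{I_k}),\eta(\phi^{*}\widetilde{\snf}_j^{I'_j})\right)=\left(\eta(\snf_k^{I_k}),\eta(\widetilde{\snf}_j^{I'_j})\right)$ (again using invariance of $\eta$ on each $\Gamma_1$ factor) produces exactly the stated identity, the pairing of the indices $k$ and $j$ being the one that $\phi$ induces on the components.

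The analytic content is essentially nil; the one point demanding care is bookkeeping of the component pairing. I expect the main obstacle to be checking that the matching of $k$ with $j$ coming from $\phi$ is compatible with the permutation action $\rho(\widetilde{\zeta})$ inherited from the Lemma, and that, after the non-diffeomorphic case has annihilated every mismatched term, what remains is precisely a sum over diffeomorphic pairings that coincides with the Lemma's right-hand side transported by $\phi$. Verifying this consistency is what closes the argument.
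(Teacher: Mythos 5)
Your proposal is correct and follows essentially the same route as the paper: the paper likewise disposes of non-diffeomorphic $\gamma,\gamma'$ by orthogonality as an immediate consequence of \eqref{eqn_diffeoInvariantFunctional} and \eqref{eqn_diffeoInvariantInnerProduct}, and reduces the diffeomorphic case to $\gamma=\gamma'$ via invariance of $\eta$, then invokes Lemma \ref{lemma_resolutionOfDiffInvInnerProduct}. Your added check that the right-hand side also vanishes term-by-term in the non-diffeomorphic case is a detail the paper leaves implicit, but it does not change the argument.
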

\noindent The diffeomorphism invariant inner product thus shows a remarkable structure, as if there is an underlying tensor product over graph components, and a symmetrization over diffeomorphic components. Therefore, it is tempting to try to see if one can indeed construct an isomorphic Fock space. We therefore consider the symmetric Fock space over $\eta(\hilbertSpaceSub{1})$:
\begin{equation}
\label{eqn_theFockSpaceDef}
    \fockSpace := \mathcal{F}_{\text{S}}[\eta (\hilbertSpaceSub{1})].
\end{equation}
We call this space the \textit{component Fock space}. Let us consider $n$-particle states. We fix $n$ one-component graphs $c_k$, $k=1,2,\ldots,n$, and spin net bases $\{\snf_{k}^I\}_I$ of $\hilbertSpaceSub{c_k}$. Given the spin network bases over the one-component graphs we can expand $n$-particle states based on $c_k$, $k=1,2,\ldots,n$ as 
\begin{equation}
\label{eqn_fockStatesAnsatz}
        \fockstate{f} = \frac{1}{n!}\sum_{\sigma\in\mathrm{S}_n}\sum_{\left\lbrace I\right\rbrace} \fockSpaceElement{f}_{\left\lbrace I\right\rbrace} \bigotimes_{l=1}^{n} \left| \eta(\snf_{\gamma_{\sigma (l)}}^{I_{\sigma (l)}})\right)
\end{equation}
where $\fockSpaceElement{f}_{\{I\}} \in \mathbb{C}$ and $\snf$ are the normalized generalized spin network functions as before. 
For bookkeeping purposes we partition the components $\{c_k\}$ into sets of diffeomorphic components, with the size of the $l$\textsuperscript{th} set denoted by $m_{l}$. We have 
\begin{equation}
    \sum_{k=1}^{N} m_{k} = n,
\end{equation}
where $N$ is the total number of disjoint sets in the partition. 
We denote by $\mathrm{S}_{n}^{p}$ the subgroup of the symmetric group $\mathrm{S}_n$ that preserves the given partition. It is a subgroup, and we can write the sum over permutations in terms of its conjugacy classes, 
\begin{equation}
    \sum_{\sigma\in\mathrm{S}_n} f(\sigma)= \sum_{[\pi]\in\mathrm{S}_n / \mathrm{S}_n^p} \sum_{\sigma\in \mathrm{S}_{n}^{p}} f(\sigma\circ\pi).
\end{equation}
The $n$-particle state \eqref{eqn_fockStatesAnsatz} then takes the form
\begin{equation}
    \fockstate{f} = \frac{1}{|\mathrm{S}_n^p|} \sum_{\sigma\in\mathrm{S}_n^p} \frac{|\mathrm{S}_n^p|}{|\mathrm{S}_n|} \sum_{[{\pi}]\in\mathrm{S}_n / \mathrm{S}_n^p} \sum_{\left\lbrace I\right\rbrace}  \fockSpaceElement{f}_{\left\lbrace I\right\rbrace}    \bigotimes_{l=1}^n \left| \eta(\snf_{c_{\pi(l)}}^{I_{\sigma \circ \pi (l)}})\right),
\end{equation}
where $|\mathrm{S}_n^p| = m_1 ! \dots m_N !$ and $|\mathrm{S}_n| = n!$. Moreover, one can symmetrize the components $\fockSpaceElement{f}_{\{I\}}$ by the permutation $\sigma$ and therefore the states expanded in such a basis can finally be written as
\begin{equation}
\label{eqn_fockStatesAnsatzSimplified}
    \fockstate{f} = \frac{1}{|\mathrm{S}_n^p|} \sum_{\sigma\in\mathrm{S}_n^p} \frac{|\mathrm{S}_n^p|}{|\mathrm{S}_n|} \sum_{[{\pi}]\in\mathrm{S}_n / \mathrm{S}_n^p} \sum_{\left\lbrace I\right\rbrace}  \fockSpaceElement{f}_{\left\lbrace \sigma I\right\rbrace} \bigotimes_{l=1}^n \left| \eta(\snf_{c_{{\pi (l)}}}^{I_{{\pi} (l)}})\right),
\end{equation}
where $\sigma I \equiv \sigma (I_1, I_2, \dots) := I_{\sigma (1)}, I_{ \sigma(2)}, \dots$. We now expand the inner product of $\fockSpace$ accordingly, and show that it takes a very similar form as that in lemma \ref{lemma_resolutionOfDiffInvInnerProduct}.

\begin{lemma}[Inner product on $\fockSpace$]
\label{lemma_resolutionOfFockInnerProduct}
Let $\fockSpace$ be as above, and $\fockstate{f}$, $\fockstate{f'}$ $n$- respectively $n'$-particle states in $\fockSpace$ of the form \eqref{eqn_fockStatesAnsatz} over one-component graphs $c_1,\ldots c_n$ and $c'_1,\ldots c'_{n'}$. 
Then 
\begin{equation}
    \fockProduct{f}{f'} = 0 
\end{equation}
if $n\neq n'$ or $\{[c_k]\}\neq \{[c'_k]\}$, where $[\,\cdot\,]$ denote diffeomorphism equivalence classes, and
\begin{equation}
    \fockProduct{f}{f'} = \frac{1}{|\mathrm{S}_n^p|}\sum_{\pi\in\mathrm{S}_n^p} \frac{|\mathrm{S}_n^p|}{|\mathrm{S}_n|} \sum_{\left\lbrace I\right\rbrace , \left\lbrace I'\right\rbrace} \Bar{\fockSpaceElement{f}}_{\left\lbrace \pi I\right\rbrace} \fockSpaceElement{f}'_{\left\lbrace I' \right\rbrace} \prod_{l=1}^{n}\left(\eta(\snf_{c_{ l}}^{I_{l}}), \eta(\snf_{c_{l}}^{I'_{l}})\right)
\end{equation}
otherwise.
\end{lemma}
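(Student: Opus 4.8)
The plan is to reduce the Fock inner product to a single sum of products of one-particle inner products, and then to kill almost all of those terms using the orthogonality of the states $\eta(\snf_c^I)$ for non-diffeomorphic one-component graphs $c$. First I would observe that the ansatz \eqref{eqn_fockStatesAnsatz} is nothing but $P_{\mathrm{S}}$ applied to the unsymmetrised vector $\sum_{\{I\}}\fockSpaceElement{f}_{\{I\}}\bigotimes_{l}\left|\eta(\snf_{c_l}^{I_l})\right)$, where $P_{\mathrm{S}}$ is the orthogonal symmetrisation projector on the $n$-particle sector. Since $P_{\mathrm{S}}=P_{\mathrm{S}}^{\dagger}=P_{\mathrm{S}}^{2}$, one of the two symmetrisations can be absorbed, so that the double permutation sum collapses to a single one. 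For $n=n'$ this leaves
\begin{equation}
\fockProduct{f}{f'} = \frac{1}{n!}\sum_{\sigma \in \mathrm{S}_n}\sum_{\{I\},\{I'\}} \overline{\fockSpaceElement{f}}_{\{I\}}\,\fockSpaceElement{f}'_{\{I'\}}\prod_{l=1}^{n} \left(\eta(\snf_{c_l}^{I_l}),\, \eta(\snf_{c'_{\sigma(l)}}^{I'_{\sigma(l)}})\right),
\end{equation}
while for $n\neq n'$ the product already vanishes because the $n$- and $n'$-particle sectors of $\fockSpace$ are mutually orthogonal by construction.

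Both vanishing claims then follow quickly. The $n\neq n'$ case is immediate. For $n=n'$ with $\{[c_k]\}\neq\{[c'_k]\}$, I would invoke the one-component ($n=1$) specialisation of the component decomposition of the inner product (the corollary preceding this lemma): the factor $\left(\eta(\snf_{c_l}^{I_l}),\eta(\snf_{c'_{\sigma(l)}}^{I'_{\sigma(l)}})\right)$ vanishes unless $c_l$ and $c'_{\sigma(l)}$ are diffeomorphic. A nonzero term in the $\sigma$-sum would therefore require a permutation with $c_l\sim c'_{\sigma(l)}$ for every $l$, and such a $\sigma$ exists precisely when the multisets of diffeomorphism classes coincide; hence every term vanishes when they differ.

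For the generic case $n=n'$, $\{[c_k]\}=\{[c'_k]\}$, I would relabel so that $c_l$ is diffeomorphic to $c'_l$ for each $l$. The same orthogonality now restricts the $\sigma$-sum to the permutations that send each block of mutually diffeomorphic components into itself, which is by definition the partition-preserving subgroup $\mathrm{S}_n^p$. Identifying the spin-net basis $\{\snf_{c'_l}^{I}\}$ with $\{\snf_{c_l}^{I}\}$ across diffeomorphic components turns each surviving factor into $\left(\eta(\snf_{c_l}^{I_l}),\eta(\snf_{c_l}^{I'_{\sigma(l)}})\right)$, and a relabelling of the summation multi-index (together with the symmetry of the coefficients built into \eqref{eqn_fockStatesAnsatzSimplified}) carries the residual permutation $\pi\in\mathrm{S}_n^p$ onto the coefficient, yielding $\overline{\fockSpaceElement{f}}_{\pi I}$ next to the \quotes{diagonal} product $\prod_l\left(\eta(\snf_{c_l}^{I_l}),\eta(\snf_{c_l}^{I'_l})\right)$. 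Finally I would rewrite the overall constant as $\tfrac{1}{n!}=\tfrac{1}{|\mathrm{S}_n^p|}\cdot\tfrac{|\mathrm{S}_n^p|}{|\mathrm{S}_n|}$, which reproduces the stated prefactor and completes the identification with the claimed expression.

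The step I expect to be the main obstacle is the purely combinatorial bookkeeping in the last paragraph: keeping the permutations acting on the component labels cleanly separated from those acting on the spin-net indices, making the identification of spin-net bases across diffeomorphic components consistently so that the residual permutation lands on $\fockSpaceElement{f}$ exactly as written, and checking that restricting the $\sigma$-sum to $\mathrm{S}_n^p$ neither overcounts nor undercounts when several components are mutually diffeomorphic. I would also want to verify explicitly that the $P_{\mathrm{S}}$-reduction is legitimate for the already-symmetrised ansatz, rather than merely for a single product vector.
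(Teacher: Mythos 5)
Your proposal is correct and lands on the paper's formula via the same two essential ingredients --- orthogonality of the one-particle states $\eta(\snf_{c}^{I})$, $\eta(\snf_{c'}^{I'})$ for non-diffeomorphic $c,c'$, and the subgroup structure $\mathrm{S}_n^p\subset\mathrm{S}_n$ --- but it routes the combinatorics differently, and more cleanly. The paper expands \emph{both} states in the coset-decomposed form \eqref{eqn_fockStatesAnsatzSimplified}, obtaining a quadruple sum over $\sigma,\sigma'\in\mathrm{S}_n^p$ and $[\pi],[\pi']\in\mathrm{S}_n/\mathrm{S}_n^p$; orthogonality then forces $[\pi]=[\pi']$, collapsing one coset sum into the factor $|\mathrm{S}_n^p|/|\mathrm{S}_n|$, and the double $\mathrm{S}_n^p$-symmetrization of the coefficients is observed to be redundant, so one of those sums is dropped as well. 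Your opening move $P_{\mathrm{S}}=P_{\mathrm{S}}^{\dagger}=P_{\mathrm{S}}^{2}$ is the abstract statement behind exactly this collapse (it encodes $P_{\mathrm{S}}U_{\pi}=P_{\mathrm{S}}$ for every permutation operator $U_{\pi}$), and it buys you a single $\tfrac{1}{n!}\sum_{\sigma\in\mathrm{S}_n}$ from the start, avoiding the quadruple sum entirely; orthogonality then restricts the surviving terms to $\sigma\in\mathrm{S}_n^p$, and the factorization $\tfrac{1}{n!}=\tfrac{1}{|\mathrm{S}_n^p|}\cdot\tfrac{|\mathrm{S}_n^p|}{|\mathrm{S}_n|}$ reproduces the stated prefactor. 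The bookkeeping you flagged as the main obstacle does go through: after identifying spin-net bases across diffeomorphic components (legitimate since $\eta$ is diffeomorphism invariant), substitute $J'_l:=I'_{\sigma(l)}$ and then relabel $(I,J')\mapsto(\sigma I,\sigma J')$ simultaneously; the diagonal product $\prod_{l=1}^{n}\left(\eta(\snf_{c_l}^{I_l}),\eta(\snf_{c_l}^{J'_l})\right)$ is invariant under this simultaneous relabeling precisely because $\sigma\in\mathrm{S}_n^p$ only permutes mutually diffeomorphic components, so the residual permutation lands on the coefficient as $\overline{\fockSpaceElement{f}}_{\left\lbrace\sigma I\right\rbrace}$, exactly as in the claimed expression. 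One small remark: you do not actually need the coefficient symmetry built into \eqref{eqn_fockStatesAnsatzSimplified} --- the claimed formula is itself the $\mathrm{S}_n^p$-symmetrization of $\overline{\fockSpaceElement{f}}$ and holds for arbitrary coefficients in \eqref{eqn_fockStatesAnsatz} --- so that invocation is harmless but superfluous. Your handling of the two vanishing cases coincides with the paper's.
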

\begin{proof}
States with different numbers of particles are orthogonal for any Fock space, and so are states of the form \eqref{eqn_fockStatesAnsatz} in which for each of the summands, one of the constituent one-component states is orthogonal to all one-particle states contained in the other. This is the case if $\{[c_k]\}\neq \{[c'_k]\}$, therefore we have shown the first part of the lemma. In the following we will assume that $n=n'$ and $\{[c_k]\} = \{[c'_k]\}$. Hence through relabeling, without loss of generality we can assume $[c_k]=[c'_k]$ and due to diffeomorphism invariance of the one-particle inner product $c_k=c'_k$ for $k=1,\ldots,n$. Then, using \eqref{eqn_fockStatesAnsatzSimplified}, 
\begin{equation}
    \fockProduct{f}{f'} =
    \begin{aligned}[t]
    \frac{1}{|\mathrm{S}_n^p|}\sum_{\sigma\in\mathrm{S}_n^p}\frac{|\mathrm{S}_n^p|}{|\mathrm{S}_n|}\sum_{[\pi]\in\mathrm{S}_n / \mathrm{S}_n^p} \frac{1}{|\mathrm{S}_n^p|}\sum_{\sigma '\in\mathrm{S}_n^p}\frac{|\mathrm{S}_n^p|}{|\mathrm{S}_n|}\sum_{[\pi ']\in\mathrm{S}_n / \mathrm{S}_n^p} \sum_{\left\lbrace I\right\rbrace , \left\lbrace I'\right\rbrace} &  \Bar{\fockSpaceElement{f}}_{\left\lbrace \sigma I\right\rbrace} \fockSpaceElement{f'}_{\left\lbrace \sigma' I' \right\rbrace} \times
    \\
    & \times\prod_{l=1}^{n}\left(\eta(\snf_{c_{\pi (l)}}^{I_{\pi (l)}}), \eta(\snf_{c_{\pi' (l)}}^{I'_{\pi' (l)}})\right).
    \end{aligned}
\end{equation}
Note that the product of the spin networks is only non-zero when $[\pi] = [\pi']$. This implies that 
\begin{equation}
    \pi'\circ\pi^{-1}\in\mathrm{S}_n^p 
\end{equation}
and hence $\gamma_{\pi(l)}$ is diffeomorphic to $\gamma_{\pi'(l)}$ for every $l$. 
As a result, only one term, as well as the prefactor $|\mathrm{S}_n^p|/|\mathrm{S}_n|$, in the sum $\sum_{[\widetilde{\pi} ']\in\mathrm{S}_n / \mathrm{S}_n^p}$ remains. Lastly, the symmetrization projection applied to the coefficients $\fockSpaceElement{f}$ and $\fockSpaceElement{g}$ need only be applied to either one and as such, the inner product reduces to
\begin{align}
    \fockProduct{f}{f'} & = \frac{1}{|\mathrm{S}_n^p|}\sum_{\sigma\in\mathrm{S}_n^p}\frac{|\mathrm{S}_n^p|}{|\mathrm{S}_n|}\sum_{[\pi]\in\mathrm{S}_n / \mathrm{S}_n^p} \frac{|\mathrm{S}_n^p|}{|\mathrm{S}_n|} \sum_{\left\lbrace I\right\rbrace , \left\lbrace I'\right\rbrace} \Bar{\fockSpaceElement{f}}_{\left\lbrace \pi I\right\rbrace} \fockSpaceElement{g}_{\left\lbrace I' \right\rbrace} \prod_{l=1}^{n}\left(\eta(\snf_{\gamma_{\widetilde{\pi} (l)}}^{I_{\widetilde{\pi} (l)}}), \eta(\snf_{\gamma_{\widetilde{\pi} (l)}}^{I'_{\widetilde{\pi} (l)}})\right) , 
    \\
    & = \frac{1}{|\mathrm{S}_n^p|}\sum_{\pi\in\mathrm{S}_n^p}\frac{|\mathrm{S}_n^p|}{|\mathrm{S}_n|}\sum_{[\widetilde{\pi}]\in\mathrm{S}_n / \mathrm{S}_n^p} \frac{|\mathrm{S}_n^p|}{|\mathrm{S}_n|} \sum_{\left\lbrace I\right\rbrace , \left\lbrace I'\right\rbrace} \Bar{\fockSpaceElement{f}}_{\left\lbrace \pi I\right\rbrace} \fockSpaceElement{g}_{\left\lbrace I' \right\rbrace} \prod_{l=1}^{n}\left(\eta(\snf_{\gamma_{\widetilde{\pi} (l)}}^{I_{\widetilde{\pi} (l)}}), \eta(\snf_{\gamma_{\widetilde{\pi} (l)}}^{I'_{\widetilde{\pi} (l)}})\right) \delta_{I_{\widetilde{\pi} (l)}, I'_{\widetilde{\pi} (l)}} , 
    \\
    & = \frac{1}{|\mathrm{S}_n^p|}\sum_{\pi\in\mathrm{S}_n^p}\frac{|\mathrm{S}_n^p|}{|\mathrm{S}_n|}\sum_{[\widetilde{\pi}]\in\mathrm{S}_n / \mathrm{S}_n^p} \frac{|\mathrm{S}_n^p|}{|\mathrm{S}_n|} \sum_{\left\lbrace I\right\rbrace , \left\lbrace I'\right\rbrace} \Bar{\fockSpaceElement{f}}_{\left\lbrace \pi I\right\rbrace} \fockSpaceElement{g}_{\left\lbrace I' \right\rbrace} \prod_{l=1}^{n}\left(\eta(\snf_{\gamma_{l}}^{I_{l}}), \eta(\snf_{\gamma_{l}}^{I'_{l}})\right) , 
    \\
    \label{eqn_fockInnerProductResult}
    & = \frac{1}{|\mathrm{S}_n^p|}\sum_{\pi\in\mathrm{S}_n^p} \frac{|\mathrm{S}_n^p|}{|\mathrm{S}_n|} \sum_{\left\lbrace I\right\rbrace , \left\lbrace I'\right\rbrace} \Bar{\fockSpaceElement{f}}_{\left\lbrace \pi I\right\rbrace} \fockSpaceElement{g}_{\left\lbrace I' \right\rbrace} \prod_{l=1}^{n}\left(\eta(\snf_{\gamma_{ l}}^{I_{l}}), \eta(\snf_{\gamma_{l}}^{I'_{l}})\right),
\end{align}
which is the $n$-fold product of the spin networks over the components of the graph. Note that in the second line, the only non-trivial terms which remain are ones which have all equal spin network indices. Therefore, since the indices are all matched, the permutations $\widetilde{\pi}$ are dropped from the indices in the next line and as a result, the sum over the permutations $[\widetilde{\pi}] \in \mathrm{S}_{n} / \mathrm{S}_{n}^{p}$ just gives the order of the group and the last line is then obtained. 
\end{proof}

\noindent In the above lemma, we have shown that a Fock space $\fockSpace$ constructed from Hilbert space of the of one component graphs carries an inner product which closely resembles the one obtained on $\hDiff$. Therefore, in the following theorem, an explicit isomorphism is constructed between these two spaces.
\begin{theorem}[An isomorphism between $\fockSpace$ and $\hDiff$]
\label{theorem_FockSpaceIsomorphism}
Consider the map 
\begin{equation}
\label{eqn_iso}
    \fockIsoMap : \diffState{\eta(\Psi_{\gamma})} \longmapsto 
    \frac{c_\gamma}{n!} 
    \sum_{\sigma\in\mathrm{S}_n}\sum_{\left\lbrace I\right\rbrace} \psi_{\left\lbrace I\right\rbrace} \bigotimes_{l=1}^{n} \left| \eta(\snf_{\gamma_{\sigma (l)}}^{I_{\sigma (l)}})\right), \qquad c_\gamma = \sqrt{\frac{n!}{m_1 ! \cdots m_k !}}
\end{equation}
where $\gamma$ is a graph with $n$ unlinked components $\gamma_1, \ldots \gamma_n$, of which subsets of size $m_1, m_2, \ldots, m_k$ are diffeomorphic to each other but not to the rest. 
$p=(m_1, m_2, \ldots, m_k)$ is considered a partition of $n$, and $\mathrm{S}_n^p$ is the subgroup of permutations fixing $p$. Moreover
\begin{equation}
    \Psi = \sum_{I}\psi_{\left\lbrace I \right\rbrace}\prod_{l = 1}^{n}\snf_{\gamma_l}^{I_l}.
\end{equation}
is a cylindrical function in $\mathscr{H}'_\gamma$. Then $\fockIsoMap$
\begin{itemize}
    \item is a well defined map from $\mathscr{H}'_\gamma$ to $\fockSpace = \mathcal{F}_{\text{S}}[\eta (\hilbertSpaceSub{1})]$,
    \item is a partial isometry, 
    \item extends linearly to a unitary map $\hDiff\rightarrow \fockSpace$. 
\end{itemize}
\end{theorem}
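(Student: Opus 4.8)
The three assertions are tightly linked through a single norm computation, so the plan is to establish the partial-isometry property first and to read off well-definedness and unitarity from it afterwards. The starting observation is that the right-hand side of \eqref{eqn_iso} is precisely an $n$-particle state of the form \eqref{eqn_fockStatesAnsatz}, with Fock coefficients $\fockSpaceElement{f}_{\left\lbrace I\right\rbrace} = c_\gamma\,\psi_{\left\lbrace I\right\rbrace}$. Its norm is therefore directly computable from lemma \ref{lemma_resolutionOfFockInnerProduct}, while the norm of $\diffState{\eta(\Psi_\gamma)}$ is supplied by lemma \ref{lemma_resolutionOfDiffInvInnerProduct}. The entire content of the partial-isometry claim is that these two expressions coincide.

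To carry this out I would substitute $\fockSpaceElement{f} = c_\gamma\psi$ into the inner product of lemma \ref{lemma_resolutionOfFockInnerProduct} and collect the combinatorial prefactors. The key numerical inputs are that, by lemma \ref{lemma_structure_symmetries}, $\widetilde{\graphSymmetries}_\gamma \simeq \bigtimes_i \sym_{m_i}$ so that $|\widetilde{\graphSymmetries}_\gamma| = m_1!\cdots m_k! = |\mathrm{S}_n^p|$, together with $c_\gamma^2 = n!/(m_1!\cdots m_k!) = |\mathrm{S}_n|/|\mathrm{S}_n^p|$. These conspire so that $c_\gamma^2 \cdot \frac{1}{|\mathrm{S}_n^p|}\cdot\frac{|\mathrm{S}_n^p|}{|\mathrm{S}_n|} = \frac{1}{|\mathrm{S}_n^p|} = \frac{1}{|\widetilde{\graphSymmetries}_\gamma|}$, which is exactly the overall prefactor in lemma \ref{lemma_resolutionOfDiffInvInnerProduct}. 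It then remains to identify the sum over $\pi\in\mathrm{S}_n^p$ acting on the indices $\left\lbrace \pi I\right\rbrace$ with the sum over $\widetilde{\zeta}\in\widetilde{\graphSymmetries}_\gamma$ acting through $\rho(\widetilde{\zeta})$; this is precisely the isomorphism of lemma \ref{lemma_structure_symmetries}, under which $\mathrm{S}_n^p$ (the permutations preserving the partition into diffeomorphic classes) is carried onto the permutation action of $\widetilde{\graphSymmetries}_\gamma$ on the components. With this dictionary the two inner products agree term by term, establishing $\normsquared{\fockIsoMap\diffState{\eta(\Psi_\gamma)}}_{\fockSpace} = \normsquared{\diffState{\eta(\Psi_\gamma)}}_{\hDiff}$, and by polarization the corresponding identity for two distinct states.

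For well-definedness I would check independence of the arbitrary choices and that the prescription factors through $\eta$. The labeling of components is immaterial because of the symmetrization $\sum_{\sigma\in\mathrm{S}_n}$; the non-invariant part of $\Psi_\gamma$ is consistently annihilated because, first, a single-component spin network $\snf_c^I$ that is not $\graphSymmetries_c$-invariant already has $\eta(\snf_c^I)=0$, so that the internal symmetries collected in $\mathrm{H}_\gamma \simeq \bigtimes_c \graphSymmetries_c$ are automatically divided out in the one-particle states $\eta(\snf_{\gamma_l}^{I_l})$, and second, the symmetrization removes the part of $\psi_{\left\lbrace I\right\rbrace}$ orthogonal to the $\mathrm{S}_n^p$-invariant subspace. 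Since the kernel of $\eta$ restricted to $\mathscr{H}'_\gamma$ is exactly the $\graphSymmetries_\gamma$-non-invariant subspace (the self-overlap $\left(\eta(\Psi)|\Psi\right>$ reduces to the norm of the $\graphSymmetries_\gamma$-invariant projection), the norm identity just established shows that $\fockIsoMap$ vanishes on $\ker\eta$; hence it descends to a genuine isometry on $\hDiff = \eta(\mathscr{H})$. Orthogonality of the images across different particle numbers or different multisets of diffeomorphism classes, matching the vanishing clauses of lemmas \ref{lemma_resolutionOfDiffInvInnerProduct} and \ref{lemma_resolutionOfFockInnerProduct}, guarantees that these block-wise isometries assemble consistently.

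Finally, unitarity follows by combining isometry (hence injectivity and closed range) with surjectivity. For surjectivity I would note that $\fockSpace = \mathcal{F}_{\text{S}}[\eta(\hilbertSpaceSub{1})]$ is densely spanned by symmetrized tensor products $\bigotimes_l \diffState{\eta(\snf_{c_l}^{I_l})}$ of one-particle states, and that each such product is, up to the normalization $c_\gamma$, the image under $\fockIsoMap$ of $\eta(\Psi_\gamma)$ with $\Psi_\gamma = \prod_l \snf_{c_l}^{I_l}$ on $\gamma = \bigcup_l c_l$. Extending by linearity and continuity then yields a unitary map $\hDiff \to \fockSpace$. The main obstacle is not any isolated step but the combinatorial bookkeeping of the second paragraph: one must keep the three symmetrizations — over $\mathrm{S}_n$ in the definition, over $\mathrm{S}_n^p$ against the collapsed cosets $\mathrm{S}_n/\mathrm{S}_n^p$ in lemma \ref{lemma_resolutionOfFockInnerProduct}, and over $\widetilde{\graphSymmetries}_\gamma$ in lemma \ref{lemma_resolutionOfDiffInvInnerProduct} — perfectly aligned, and verify that the single constant $c_\gamma$ reconciles all of them at once.
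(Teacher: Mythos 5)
Your proposal is correct and takes essentially the same route as the paper: your norm computation is precisely the ``short calculation'' the paper leaves implicit, matching $c_\gamma^2=|\mathrm{S}_n|/|\mathrm{S}_n^p|$ against the prefactors of Lemmas \ref{lemma_resolutionOfDiffInvInnerProduct} and \ref{lemma_resolutionOfFockInnerProduct} via $|\widetilde{\graphSymmetries}_\gamma|=|\mathrm{S}_n^p|=m_1!\cdots m_k!$ from Lemma \ref{lemma_structure_symmetries}, and your unitarity argument (isometry plus surjectivity onto the symmetrized tensor products, with orthogonal sectors assembled by linearity) is the paper's as well. One small imprecision in the well-definedness step: $\eta(\snf_c^I)$ does not vanish for every non-$\graphSymmetries_c$-invariant spin network, only for the part orthogonal to the invariant subspace --- but your subsequent correct identification of $\ker\eta$ with that orthogonal complement (via the self-overlap) is what the descent through $\eta$ actually uses, and it is an acceptable variant of the paper's uniqueness-of-characterizing-data argument.
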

\begin{proof}
First we have to show that $\fockIsoMap$ is well defined. To this end, note that according to Lemma \ref{lemma_resolutionOfDiffInvInnerProduct}, the inner product with $\eta(\Psi_\gamma)$ -- and hence $\eta(\Psi_\gamma)$ itself -- is described by 
\begin{itemize}
    \item the list of diffeomorphism equivalence classes $[\gamma_1], \ldots, [\gamma_n]$
    \item corresponding coefficients of group averaged spin networks $\Psi_{I_1,\ldots, I_n}$ with the symmetry properties  
    \begin{equation}
    \label{eqn_sym1}
        \Psi_{I_1,I_2,\ldots I_n}=\Psi_{I_{\pi(1)},I_{\pi(2)},\ldots, I_{\pi(n)}}, \qquad \pi\in \mathrm{S}_n^p, 
    \end{equation}
    and the symmetry imposed on $\Psi_{\{I\}}$ due to the averaging over $\mathrm{H}_\gamma$, the symmetries generated by the graph symmetries of the components (see lemma \ref{lemma_decompositionOfGraphSymmetries}). We could write it as 
    \begin{equation}
    \label{eqn_sym2}
        \Psi_{I_1,I_2,\ldots I_n}=\Psi_{\sigma_1(I_1),\sigma_2(I_2),\ldots, \sigma_n(I_n)}, 
    \end{equation} 
    where $\sigma_i=\sigma_i(\varphi), \varphi\in \mathrm{H}_\gamma$ correspond to permutations of the edge labels. 
\end{itemize}
The assignment of this data is \emph{unique}, up to reordering of graph components and edges: Changing anything else about this data will change the inner product and hence the state. That said, $\fockIsoMap$ is well defined because the image state just depends on the data listed above, and it depends on it in such a way that reordering of the graph component does not change the image. The ordering of the edges is not even explicitly used.  
\startNewParagraph
Next it needs to be shown that $\fockIsoMap$ is a partial isometry. This is ensured by the prefactor $c_\gamma$ in \eqref{eqn_iso}, and lemmas \ref{lemma_resolutionOfDiffInvInnerProduct}, \ref{lemma_resolutionOfFockInnerProduct}, as a short calculation shows. 
\startNewParagraph
Finally we show that $\fockIsoMap$ extends to a unitary map $\hDiff\rightarrow \fockSpace$. Note first that all states in the $n$-particle sector are, by definition, linear combinations of states of the form \eqref{eqn_fockStatesAnsatz}. The latter can be simplified to \eqref{eqn_fockStatesAnsatzSimplified}, hence they are described precisely by a list of diffeomorphism equivalence classes $[\gamma_1], \ldots, [\gamma_n]$ of unlinked components, together with coefficients $\fockSpaceElement{f}_{\left\lbrace \sigma I\right\rbrace}$ with the symmetry properties \eqref{eqn_sym1}, \eqref{eqn_sym2}. Hence the map 
\begin{equation}
    \fockIsoMap: \eta(\mathscr{H}_n') \longrightarrow  \text{Sym} \bigotimes_{k=1}^{n} \eta(\mathscr{H}_1'), 
\end{equation}
where $\text{Sym}$ is the symmetrization of the tensor product, is onto and hence unitary. Finally, we can again extend by linearity to sectors with different $n$ and hence, to all of $\hDiff$. The result is obviously onto. Sectors with different $n$ are orthogonal, both, in $\hDiff$ and in $\fockSpace$. So the extended map still preserves the inner product and hence 
$\fockIsoMap: \hDiff\rightarrow \fockSpace$ is unitary. 
\end{proof}
\noindent It has then been shown that the diffeomorphism invariant Hilbert space of LQG can be viewed as a Fock space constructed from the diffeomorphism invariant Hilbert space of states over one-component graphs.  This isomorphism introduces the Fock structure one typically encounters and is familiar with in QFTs to the context of LQG. This opens new perspectives on the states in $\hDiff$. One can interpret quantum geometry to have an \textit{atomic} structure,  where every graph component plays the role a of an atom.\footnote{Perhaps an even more apt picture would be the one-component states as \emph{molecules}, which themselves are composed of atoms (the vertex states).}   
For $\gamma \in \Gamma_n$ one has a spacetime composed of $n$ such particles. 
\startNewParagraph
Note however that this Fock space structure on $\hDiff$ may not be the only one possible. In other recent work, a Fock space structure has been given to subspaces of $\hDiff$, by treating excitations at the vertices of the graph 
as single particle states \cite{Assanioussi:2018zit,Assanioussi:2020fsz}, for example. 
\startNewParagraph
In the following sections, we will make use of the new Fock structure for making contact with group field theory. We will also try to illuminate the geometric nature of the atoms, i.e., the states over one-component graphs, in the following.  

\section{Applications and interpretation of the Fock space structure}
\label{section_applications}

In this part of the paper we explore the Fock space structure in physically relevant contexts. We start by studying coherent states to emphasize the fact that one can write these states in a simple manner using the Fock structure. Some examples include coherent states with a volume expectation value which is peaked around a certain value. We also consider states with an infinite volume expectation value. Next, we compare the states one obtains in $\fockSpace$ with the multi-particle condensates used in GFT. We show that although the manner of which the states are derived is different, one arrives at the same end result and thus establishing a quantitative connection between the two approaches. Lastly, we gain some insight to how one understands the geometry of the component Fock states by studying geometric observables associated to leaves of a foliation in the quantum theory and in the classical theory. 

\subsection{Coherent states}
\noindent One of the important tools to study quantum systems are coherent states. Using them, one investigates the effects of quantum fluctuations on the underlying corresponding classical description of the quantum system of interest. The component Fock states are a superposition of states which are cylindrical over a graph with infinitely many components. This is something that is not easily done in the context of $\hDiff$. 
\startNewParagraph
We have on the Fock space $\fockSpace$ creation and annihilation operators which obey the standard canonical commutation relations. Consider then a single particle state $\fockstate{g}$ which is taken to be normalized to one. For conciseness we also write 
\begin{equation}
    \creationOperator := \creationOperator_{\fockstate{g}}, \quad \annihilationOperator := \annihilationOperator_{\fockstate{g}}.
\end{equation}
We note that $\annihilationOperator \fiducialState = 0$ where $\fiducialState$ is simply the (image of the) Ashtekar-Lewandowski vacuum in $\fockSpace$.   
\startNewParagraph
Given the above mathematical layout, one easily constructs coherent states in $\fockSpace$ which take the form
\begin{equation}
\label{eqn_coherent}
    \fockstate{f} := e^{-|c|^{2} / 2} e^{c \creationOperator} \fiducialState.
\end{equation}
It is easy to see that such states are normalized. One way to investigate the properties of such states is to look at the action of some observable on them. The volume operator constitutes a good candidate to do so. Albeit its complicated eigenvalue spectrum \cite{Ashtekar:1997fb}, it offers a direct physical insight into the geometric structure of a spacetime described by such Fock structure. It is well known that there exists eigenstates of the volume operator $V$ defined on $\hDiff$ \cite{Ashtekar:1997fb}. 
We defined the volume operator $\fockOperator{V}$ on $\fockSpace$ as
\begin{equation}
    \fockOperator{V} := \fockIsoMap V \fockIsoMap^{-1} 
\end{equation}
Taking the single particle state $\fockstate{g}$ to be an eigenstate of $\fockOperator{V}$ with an eigenvalue of $\lambda$, then looking at the expectation value of such an operator results in
\begin{align}
\expectation{\fockOperator{V}}_{\fockstate{f}} & = e^{-|c|^{2}} \sum_{m} \sum_{n}\frac{1}{m!}\frac{1}{n!} \dualFockState{0} \Bar{c}^{m} \annihilationOperator^{m} \fockOperator{V} c^{n} (\creationOperator)^{n} \fockstate{0} ,
\\
& = \lambda e^{-|c|^{2}} \sum_{m} \sum_{n} n \frac{\Bar{c}^{m}}{m!} \frac{c^{n}}{n!} n! \delta_{m, n} ,
\\
& = \lambda |c|^2 .
\end{align}
Moreover, it is straightforward to see that probabilities of eigenvalues of $\fockOperator{V}$ in this state follow Poisson statistics. This shows how simple it is to construct states with interesting properties in $\fockSpace$. 
\startNewParagraph
As another example, one can consider states with a diverging volume expectation value. Such states which are not in the domain of $V$ must exist since $V$ is unbounded. They might be interesting for the description of non-compact spatial geometries in the quantum theory. We can use the Fock structure to easily construct such states. To do so, consider once again an arbitrary single particle state $\fockstate{l}$ and let it be normalized and, furthermore, be a volume eigenstate with eigenvalue $\lambda$. Now, construct from the state $\fockstate{l}$ the $m$ fold tensor product state 
\begin{equation}
    \fockstate{l_{m}}:= \fockstate{l}^{\otimes m}.
\end{equation}
We can then consider linear combinations of such tensor product states. That is, for $c_{k} \in \mathbb{C}$ we define
\begin{equation}
    \fockstate{L} := \sum_{n = 0}^{\infty} c_{n}\fockstate{l_{n}}.
\end{equation}
One can easily check that in general the action of the volume operator $\fockOperator{V}$ is simply the second quantization of the volume operator $V$ on the one particle Hilbert space $\eta(\mathscr{H}'_1)$.  
Therefore
\begin{equation}
    \fockOperator{V} \fockstate{L} = \sum_{n = 0}^{\infty} c_{n} \fockOperator{V} \fockstate{l_{n}} = \lambda \sum_{n = 0}^{\infty} n c_{n} \fockstate{l_{n}}.
\end{equation}
Such states, which are linear combinations of tensor product states, contribute to an arbitrarily high value to the spectrum of the volume operator. Since every $\fockstate{l}$ is normalized to one, it is clear that $\normsquared{\fockstate{l_{m}}} = 1$ too. We now choose the coefficients $c_{k} \in \mathbb{C}$ such that $c_0 = 0$ and $c_{n > 0} = 1/n$. In such a case, the norm of $\fockstate{L}$ then becomes
\begin{equation}
    \normsquared{\fockstate{L}} = \sum_{n = 0}^{\infty} |c_{n}|^{2} = \frac{\pi^{2}}{6}
\end{equation}
A rather straightforward calculation of the expectation value of the volume operator shows that
\begin{equation}
    \expectation{\fockOperator{V}}_{\fockstate{L}} = \frac{1}{\normsquared{\fockstate{L}}}\sum_{n = 0}^{\infty} n \lambda |c_{n}|^{2} = \lambda \frac{6}{\pi^{2}}\sum_{n = 1}^{\infty} \frac{1}{n}, 
\end{equation}
which is nothing other than the diverging harmonic series. Thus we have constructed an easy example in which large volume eigenstates (based on large graphs) are entering with a weight sufficient to make the volume of $\Sigma$ infinite. 

\subsection{Connection to group field theory}

In this section, the Fock states are compared to the multi-particle condensate states one encounters in GFT. We show that although the two approaches differ in the manner of which the states are constructed, the resulting states are quantitatively the same.
\startNewParagraph
We start by outlining the process in which one constructs multi-particle condensates in the context of GFT. We follow the work done by Gielen \etal \cite{Gielen:2013kla,Gielen:2019kae,Gielen:2016dss} as a reference. One starts from a bosonic field, denoted by $\phi(g_I) := \phi(g_1, g_2, g_3, g_4)$ where $g_{I} \in \mathrm{SO}(4)$, which one can expand as a field operator defined on every vertex of a given graph $\gamma$ such that
\begin{equation}
    \Hat{\phi}(g_{I}) := \sum_{v\in V(\gamma)} \phi_{v}(g_{I})a_{v}.
\end{equation}
Here, $a_{v}$ denotes an annihilation operator, one whose spectrum is bounded from below by the condition that for a vacuum state $\ket{0}$ then $a_{v}\ket{0} = 0$. $a^{\dagger}_{v}$ denotes the corresponding creation operator. The geometry of the states can be understood as follows. A creation operator $\phi^{\dagger}(g_I)$ creates quantum geometries which are interpreted as tetrahedron with a geometry which can be obtained by the parallel transport of the elements $g_{I}$ of the gravitational $\mathrm{SO}(4)$ connection along links dual to its faces. One can then associate bivectors to each face of the tetrahedron, as defined in \cite{Gielen:2013kla}, such that
\begin{equation}
    B_{\Delta_I}^{AB} \sim \int_{\Delta_I} e^{A}\wedge e^{B},
\end{equation} 
where $e$ is a cotetrad field which encodes the simplical geometry. However, the bivectors must uphold two conditions to ensure the proper geometric interpretation \cite{Gielen:2013kla}. Firstly, the simplicity constraint, which can be obtained by imposing the condition that for every $(h_{I}) \in\mathrm{SO}(3)^3$
\begin{equation}
    \phi(g_1 , g_2 , g_3 , g_4) = \phi(g_1 h_1 , g_2 h_2 , g_3 h_3 , g_4 h_4) =: \phi(g_I h_I)
\end{equation}
where a particular $\mathrm{SO(3)}$ subgroup of $\mathrm{SO}(4)$ was chosen for the action of the former on the latter. 
Therefore, the field $\phi$ effectively depends on four copies of $\mathrm{SO}(4)/\mathrm{SO}(3) \sim \mathrm{SU}(2)$. The second condition is gauge invariance which can be imposed as
\begin{equation}
    \forall h\in\mathrm{SO}(4) : \phi(g_1 , g_2 , g_3 , g_4) = \phi(g_1 h , g_2 h , g_3 h , g_4 h)
\end{equation}
Consequently, one can then define a naturally gauge invariant two-particle condensate state as
\begin{equation}
\label{eqn_gftTwoParticleState}
    \ket{\xi} := e^{\hat{\xi}}\ket{0} \qquad , \qquad \hat{\xi} := \frac{1}{2}\int dg_1 \cdots dg_4 \int dh_1 \cdots dh_4 \, \xi(g_I h_{I}^{-1})\hat{\phi}^\dagger (g_I) \hat{\phi}^\dagger (h_I)
\end{equation}
for some function $\xi$ on $\mathrm{SO}(4)^4$. In the spin network picture one sees that the operator $\hat{\phi}^\dagger (g_I) \hat{\phi}^\dagger (h_I)$ creates a state which has a geometry which corresponds to two 4-valent vertices with four open ends. The open ends are then connected via the integrations enforcing gauge invariance. Note that one can define single- (or $n>2$) particle condensate states by adapting the number of field operators in the expression.
\startNewParagraph
The definition of $\ket{\xi}$ looks qualitatively similar to that of the coherent state \eqref{eqn_coherent}. However, the similarity is more than just a qualitative one. Let us investigate how 
$\hat{\xi}$ can be simplified if $\xi$ is chosen to be a gauge spin network state
\begin{equation}
    \xi(g_I) := \left(\iota_0\right)_{b_1 \dots b_4} \left(\iota_1\right)^{a_1 \dots a_4} \pi_{j_1}(g_1)_{a_1}{}^{b_1} \dots \pi_{j_4}(g_4)_{a_4}{}^{b_4}.
\end{equation}
This is a state which has two vertices and 4 edges connecting them. Here, $\iota$s and $\pi$s denote the intertwiners and spin labeled SU(2) representations respectively. For this $\xi$, the operator $\hat{\xi}$ defined above becomes
\begin{multline}
        \hat{\xi} = \int dg_1 \cdots dg_4 \, \int dh_1 \cdots dh_4 \sum_{c_I} \left(\iota_0\right)_{b_1 \dots b_4} \left(\iota_1\right)^{a_1 \dots a_4} \pi_{j_1}(g_1)_{a_1}{}^{c_1}\pi_{j_1}(h_{1}^{-1})_{c_1}{}^{b_1} \dots \pi_{j_4}(g_4)_{a_4}{}^{c_4} \times 
        \\
        \times \pi_{j_4}(h_{4}^{-1})_{c_4}{}^{b_4}  \hat{\phi}^\dagger (g_I)\hat{\phi}^\dagger (h_I).
\end{multline}
In order to evaluate this integral one can separate the variables $g_I$ and $h_I$ such that
\begin{multline}
    \hat{\xi} = \left(\iota_0\right)_{b_1 \cdots b_4} \left(\iota_1\right)^{a_1 \cdots a_4} \sum_{c_I} \left(\int dg_1 \cdots dg_4 \, \pi_{j_1}(g_1)_{a_1}{}^{c_1} \cdots\pi_{j_4}(g_4)_{a_4}{}^{c_4}\hat{\phi}^\dagger (g_I)\right) \times \\
    \times \left(\int dh_1 \cdots dh_4 \,  \pi_{j_1}(h_1^{-1})_{c_1}{}^{b_1} \cdots\pi_{j_4}(h_4^{-1})_{c_4}{}^{b_4}\hat{\phi}^\dagger (h_I)\right).
\end{multline}
This can be rewritten as a gauge invariant state in terms of pairs of creation operators which in effect act by connecting free edges which emanate from two vertices which have the intertwiners $\iota_{0}$ and $\iota_{1}$ associated to them. Therefore, after evaluating the integrals
\begin{equation}
    \hat{\xi} = \left(\iota_0\right)_{b_1 \dots b_4} \left(\iota_1\right)^{a_1 \dots a_4} \sum_{c_I} \left(a^\dagger_{j_1}\right)_{a_1}{}^{c_1}\dots\left(a^\dagger_{j_4}\right)_{a_4}{}^{c_4} \left(a^\dagger_{j_1}\right)_{c_1}{}^{b_1}\dots \left(a^\dagger_{j_4}\right)_{c_4}{}^{b_4}.
\end{equation}
In the language of LQG, we see that the $\hat{\xi}$ acts by connecting the free edges attached to the two vertices. In this specific example, the result is a 4-valent two vertex graph. The same prescription applies to constructing graphs of any number of vertices and edges. We can now build the analogue of the condensate state $\ket{\xi}$ in $\fockSpace$. To this end, let $\xi_\gamma \in \mathscr{H}_\text{kin}$ be a spin net based on a two vertex, four edge dipole graph $\gamma$ embedded in $\Sigma$.   
 $\ket{\xi}$ can then be identified with the coherent state 
\begin{equation}
    | \xi ):= e^{\creationOperator_{\eta(\xi_\gamma)}}  \fiducialState \quad \in \fockSpace. 
\end{equation}
This state will have the same physical properties as $\ket{\xi}$ of \eqref{eqn_gftTwoParticleState}, now with respect to observables on $\fockSpace$ that can be obtained via second quantization from operators on $\eta(\mathscr{H}'_1)$ or more generally via $\fockIsoMap$ from those on $\hDiff$. 
We therefore see that the process in which the resulting state is constructed differs between the two approaches but one can quantitatively identify one with the other. 

\subsection{Towards a geometric interpretation of multi-particle states}

\noindent This section aims to elucidate on the geometric interpretation of eigenstates of the number operator on $\fockSpace$, i.e., $n$-particle states with fixed $n$. Does $n$ have a geometric corelate? 
To shed light on this question, we will consider area and volume operators associated to foliations of $\Sigma$ into leaves of topology $S^2$.  We begin by looking at the quantum picture in which we consider  area operators which act on a spin network state that would end up in the two-particle sector under group averaging. We are interested in qualitative behaviour of the expectation values as we go through the leafs of the foliation. 
Next, we search for a classical geometry that would lead to similar results as obtained from the quantum picture. Once again the area associated to foliations are studied. We draw parallels between the two regimes and provide a geometric interpretation based on the classical theory. We then outline a similar analysis based on the volume operator and conclude by a short discussion on the imposition of diffeomorphism invariance on the states.

\subsubsection{Area in a foliation: the quantum picture}
\label{subsec_expectationValueOfTheAreaOperator}
\noindent 

\noindent In this section, we study the global behavior of some area operator in the quantum picture. We consider the standard area operator found in the literature of LQG \cite{Ashtekar:1996eg}. As this is more of a qualitative study, we note that we do not explicitly write out the definition of such an operator (see \cite{Ashtekar:1996eg} for details). We do note that the area operator can in principle be well defined on $\eta(\hilbertSpaceSub{1})$ if one can fix the surface by reference to physical fields. It would then also transfer to $\fockSpace$. However, this is not done here. 
\startNewParagraph
We start by specifying the graphs that are embedded in $\spatialSubMfd$. For the sake of simplicity, we consider a graph $\gamma \in \Gamma_{2}$. 
Next, we consider a foliation of $S_t$ of $\Sigma$ by $S^2$s, starting from some point $S_0$ (i.e. a degenerate leaf) in $\Sigma$.\footnote{We could for example construct such a foliation by using geometric 2-spheres $S_t :=S(R(t))$ with radii $R(t)$ such that $R(t_{1}) < R(t_{2})$ for $t_{1} < t_{2}$, with respect to some fiducial metric.} 
We require the surfaces of the foliation to be well behaved. Specifically, for any edge $e$ of a graph $\gamma \hookrightarrow \spatialSubMfd$ the surface $S_t$ is called well behaved if it intersects with $e$ a finite number of times, which is taken to be less than some arbitrary number $n_{\mathrm{max}}$, as shown in Figure \ref{fig_wellBehavedSurface}.
\begin{figure}[ht]
    \centering
    \includegraphics[scale = 0.04]{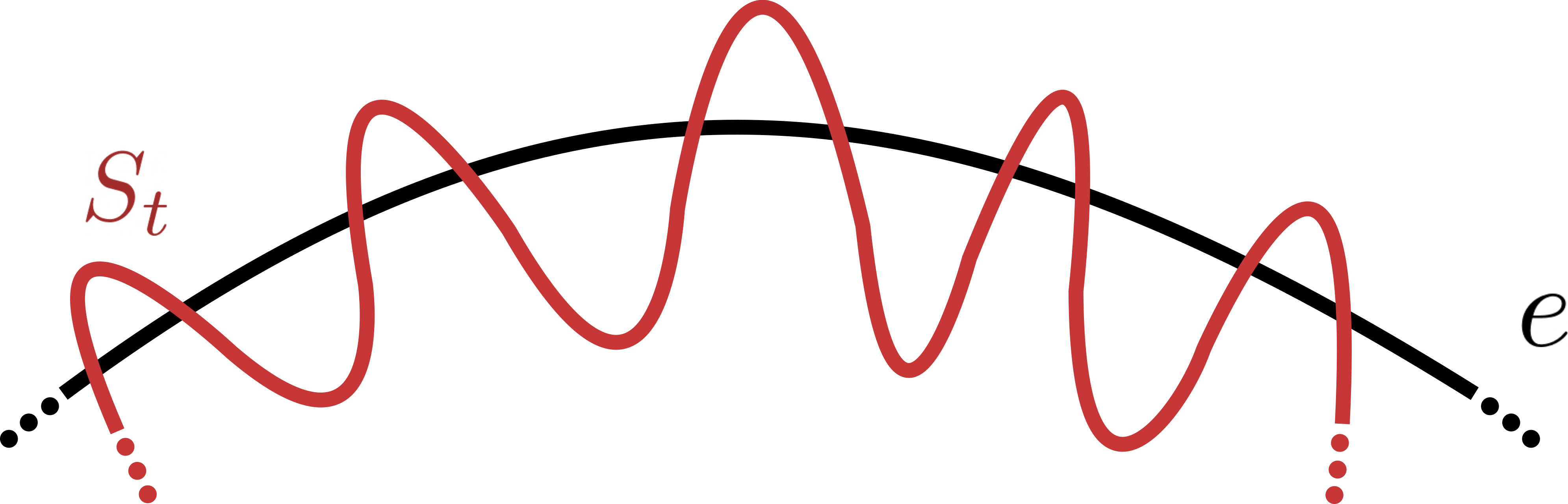}
    \caption{An illustrative picture in 1-dimension of an edge $e$ of some graph $\gamma$ and a surface $S_t$ from some foliation of the spacetime hypersurface $\spatialSubMfd$. We see that the surface $S_t$ intersects with $e$ only a finite amount of times. We call such a surface a well-behaved surface.}
    \label{fig_wellBehavedSurface}
\end{figure}
\noindent This restriction ensures the non-divergence of the expectation value of the area operator. 
Consider, for example, the two different foliations depicted in Figure \ref{fig_quantumAreaOperatorSpectrum}. 
\begin{figure}[ht]
    \centering
    \includegraphics[scale = 0.09]{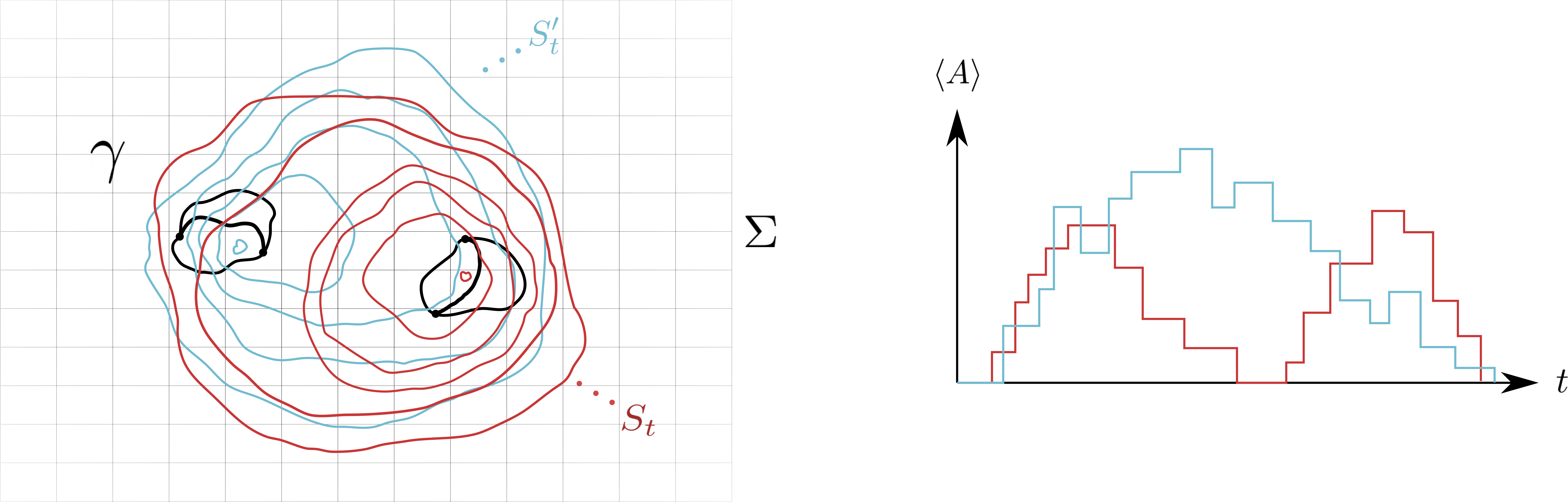}
    \caption{On the left, we see a portion of the spatial hypersurface $\spatialSubMfd$ of spacetime. It is to be noted that it extends to infinity in all directions and for illustrative purposes, the dimensions have been reduced to 2. In black, we see the two unlinked components of the graph $\gamma$ and two different foliations of $\spatialSubMfd$ in red and blue. A qualitative plot of the expectation value of the area of each of the surfaces of both the foliations is shown on the right.}
    \label{fig_quantumAreaOperatorSpectrum}
\end{figure}
\startNewParagraph
In the figure, the unlinked components of the graph $\gamma$ are shown in black. Two distinct foliations of $\spatialSubMfd$ are shown in red and blue where in the former, the surfaces foliate $\spatialSubMfd$ in such a way that it does so component by component. That is, at some point, the first graph component must be fully enveloped by some surface $S_t$ of the foliation before it begins to envelop the next component. On the other hand, in the foliation shown in blue, this condition is not there. This corresponds to two different global behaviours of the expectation values of the area of these surfaces.
Both are possible, depending on the choice of foliation. 
\startNewParagraph
The number of intersections of a given edge of the graph with some surface $S_t$ is finite. This is due to the fact that the surface is well-behaved and each point of intersection between any edge of the graph and the surface lies only within one surface of the foliation. Irrespective of the structure of the given graph, the expectation value of the area in any foliation $S_t$ starting from an arbitrary point $x \in \spatialSubMfd$ and a foliation parameter $t = t_{0}$ will start at zero. After some cutoff value $t = t_{c}$, all the intersections between the graph and the surfaces of the foliation would have been accounted for and consequently, the expectation value of the area operator returns to and remains at zero again. Between $t_{0}$ and $t_{c}$, the expectation value of the area is non-zero and one can infer the geometric structure of the graph $\gamma$ from it. It is of particular significance that there exist foliation with a double hump structure i.e. the area expectation value returning to zero and then growing again.
\startNewParagraph
One sees from Figure \ref{fig_quantumAreaOperatorSpectrum} that the choice of foliation gives an insight and a physical interpretation to the geometric structure of the graphs. It is tempting to view the different components of $\gamma$ as distinct parts of space. However, this can be further explored by searching for a classical analog to the results obtained in the quantum picture.

\subsubsection{Area in a foliation: the classical picture}
\label{subsec_classicalPictureOfTheAreaOperator}
\noindent  

\noindent In this section, we search for a classical analog for the result of the area expectation value spectrum obtained in the quantum regime. Recall that spacetime is taken to be topologically decomposed as $\spacetimeMfd = \mathbb{R}\times\spatialSubMfd$. The search for a classical analog therefore reduces to a search for a geometry for $\spatialSubMfd$ which can be foliated and give a similar picture with respect to the area values associated with the foliation.
\startNewParagraph
As a first example, we consider $\spatialSubMfd = \mathbb{R}^3$ and equipped with some Riemannian metric. The action of the area operator is one which gives the area of a bounded region in $\spatialSubMfd$. Let $x$ be an arbitrary point in $\spatialSubMfd$ and foliate $\spatialSubMfd\setminus\curlybrackets{x}$ with a generic foliation the leaves of which are topological two-surfaces denoted by $S_t$ with the foliation parameter $t \in \left[ 0, \infty\right)$. The area operator then measures the area bounded within the region $S_t$ for every $t$.
\begin{figure}[ht]
    \centering
    \includegraphics[scale = 0.1]{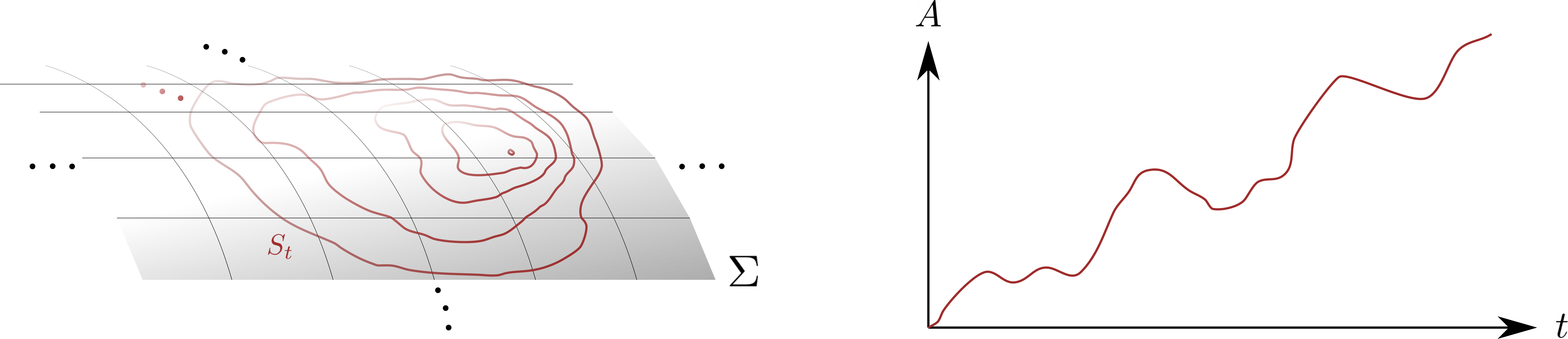}
    \caption{A region of $\spatialSubMfd = \mathbb{R}^3$ is shown on the left. The dimension is lowered by 1 for illustrative purposes. Note that $\spatialSubMfd$ extends to infinity in all directions. The first few surfaces $S_t$ of the foliation of $\spatialSubMfd$ are shown in red. On the right, a qualitative plot of the area of each of the surfaces $S_t$ is shown. A global behaviour can be seen where as $t$ grows, the area of the corresponding surface $S_t$ also grows.}
    \label{fig_r3AreaOperatorSpectrum}
\end{figure}
\startNewParagraph
 Note that the representation shown in Figure \ref{fig_r3AreaOperatorSpectrum} is reduced by 1 dimension purely for illustrative purposes. One can easily see that in such a spatial hypersurface, foliated by well behaved surfaces $S_t$, the area operator will have an overall increasing value as the foliation parameter $t$ increases, as can be illustratively seen in Figure \ref{fig_r3AreaOperatorSpectrum}. Here, by well behaved we mean that we require the leaves of the foliation to have an extrinsic curvature which is less than some fixed $K_{\mathrm{max}}$. This ensures that the leaves do not behave in such a way that allows the area to grow without a bound, for a bounded parameter $t$. This requirement is in place for all foliations that we will consider in the classical picture. Recall that a similar restriction was imposed in the quantum regime, therefore mirroring and further supporting why such a restriction is imposed. 
\begin{figure}[ht]
    \centering
    \includegraphics[scale = 0.11]{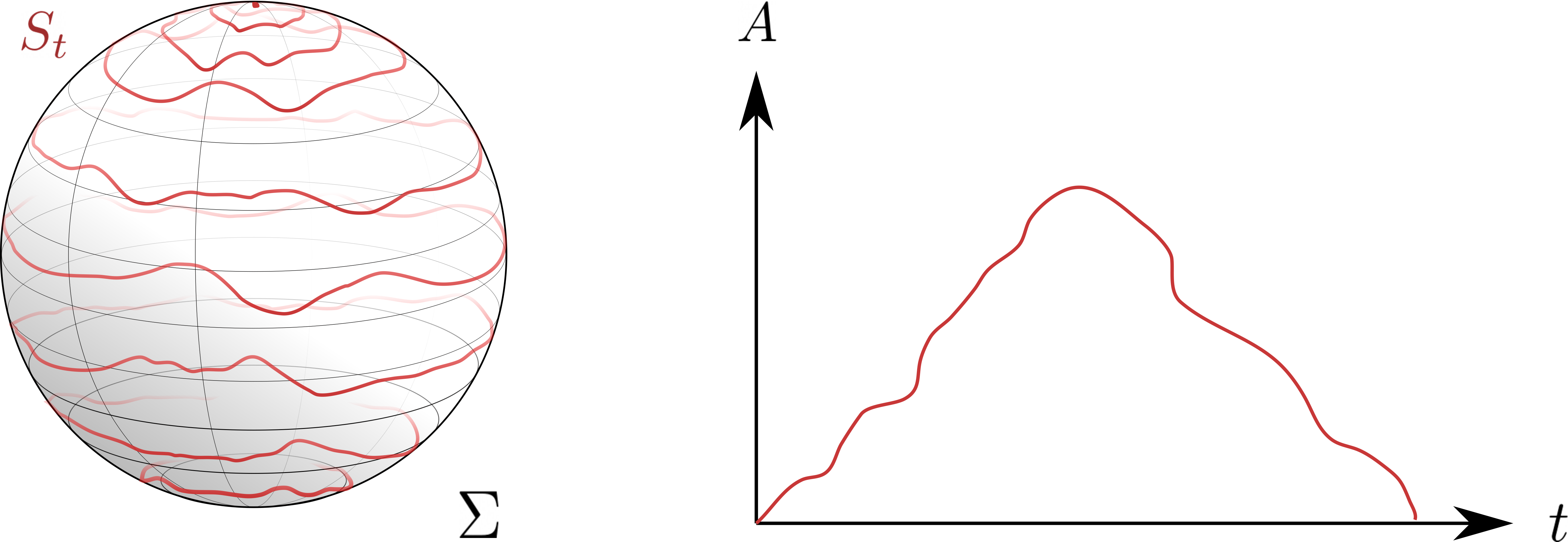}
    \caption{On the left, a spatial hypersurface that has the topology of $\spatialSubMfd = S^3$ is illustrated. The dimensions are lowered by 1 for illustrative purposes. The foliation of such a spatial hypersurface by surfaces $S_t$ can be seen in red. On the right, a qualitative plot of the area of the surfaces $S_t$ is shown as $t$ grows to infinity. We see that the area of the surfaces $S_t$ start from and end at zero as shown on the right.}
    \label{fig_sphereAreaOperatorSpectrum}
\end{figure}
\startNewParagraph
One can also consider the spatial hypersurface to be geometric $\spatialSubMfd = S^3$ where for illustrative purposes, as shown in Figure \ref{fig_sphereAreaOperatorSpectrum}, we again draw the corresponding picture in 1 dimension lower. Similar to the previous case of $\spatialSubMfd = \mathbb{R}^3$, considering appropriate foliations, one sees that the value of the area operator will follow a specific trend. Namely, the area operator will always start from zero and go back to zero. This is to be expected as we foliate the sphere from one arbitrarily chosen pole to another, regardless of the surface of foliation of choice. The last example we consider for the classical picture is shown in Figure \ref{fig_twoBubblesAreaOperatorSpectrum} where yet again the dimensions are lowered by 1 for illustrative purposes. 
\begin{figure}[ht]
    \centering
    \includegraphics[scale = 0.1]{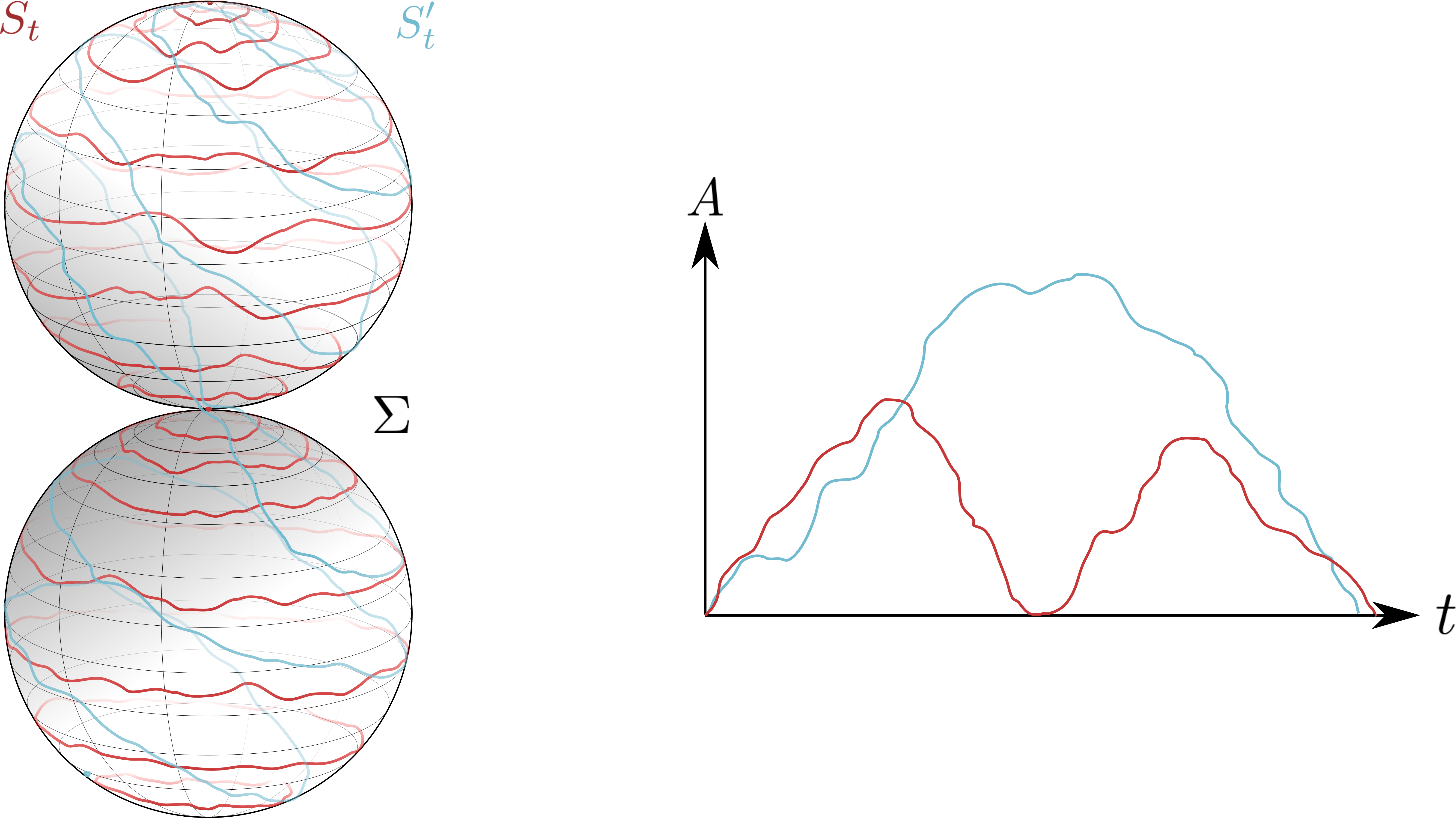}
    \caption{On the left, we see a visual representation of the spatial hypersurface $\spatialSubMfd$ of spacetime with two different foliations shown in red and blue. The former precisely passes through the gluing point of the two balls while the latter does not. On the right, we see a qualitative plot for the area of the surfaces arising from the two foliations. In the one shown in red, we can see that each peak represents one of the balls and the existence of a foliation with such an area behavior gives us an insight into the geometry of $\spatialSubMfd$}
    \label{fig_twoBubblesAreaOperatorSpectrum}
\end{figure}
\startNewParagraph
One can foliate the hypersurface $\spatialSubMfd$ in different ways. Most noticeably, one can choose a foliation such that it passes through the \quotes{gluing} point of the two balls (that is, there exists a surface $S_{t_k}$ in the foliation such that it is precisely the gluing point of the two spheres). This way, we observe a dip in the value of the area operator and then a gradual increase once again and therefore resembling the spectrum we would get for two spheres. Nevertheless, as shown in the second foliation in blue, we can have a different spectrum in which this effect is not seen by choosing a different foliation. While both are valid, the fact that the former exists gives us insight into the geometrical structure of the hypersurface $\spatialSubMfd$ and how one can visualise it.
\startNewParagraph
Moreover, this result in the classical picture closely resembles what was obtained in the quantum regime. This suggests different ways to understand the geometry of such graphs $\gamma \in \Gamma_{n}$, one of which is that these graphs resemble \quotes{bubbles} of space, essentially distinct chunks of space. This leaves room for further interpretation such as considering these bubbles to be parts of the same universe, or independent universes on their own. While certainly such an interpretation is intriguing, it should be taken lightly as it is one of the ways to understand such objects.

\subsubsection{Volume in a foliation in the quantum theory}
\label{subsec_theSpectrumOfTheVolumeOperator}
\noindent

\noindent In this section, we provide an outline for an analogous study to that of the area operator but now considering the volume operator. In the classical picture, the volume operator now measures the volume of the region bounded by a foliation surface $S_t$ in the spatial hypersurface $\spatialSubMfd$. If one considers the same three geometries of $\spatialSubMfd$ as done in the case of the area in a classical picture, it is not difficult to see that a qualitative spectrum for the volume in a classical picture will be one which will always grows, either indefinitely or not, from zero and never goes back to it.
\startNewParagraph
Moving to the quantum picture, we recall that in LQG, for a region $R$ in $\spatialSubMfd$ then the volume operator  is given as (for details see \cite{Ashtekar:1997fb})
\begin{equation}
    V(R) \sim \sum_{v\in R} V_v,
\end{equation}
where as such $V_v$ is the volume at the vertex $v$. We again consider a simple two component graph as in the case of the area operator. Since the component Fock states on such a graph are composed from spin-network functions, the volume operator will have a known, albeit complicated, discrete eigenvalue spectrum \cite{Ashtekar:1997fb,Brunnemann:2007ca}. But that does not hinder our considerations, as we can either chose a state that is an eigenstate of the vertex-volumes, or we just consider volume expectation values. 
Note that unlike the situation for the area operator, the expectation values of the volume operator are summed over the regions in a foliation of the spatial hypersurface $\spatialSubMfd$. The volume operator has an eigenvalue of zero for a vacuum state. Moreover, one can again consider an arbitrary point $x$ on $\spatialSubMfd$ and construct the foliation ``around" $x$. That is, one requires that $x$ lies in a neighborhood $U$ which is in the region $R_{t_1}$ which in turn lies inside $R_{t_2}$ and so on, where $t$ is once again a foliation parameter ranging from $0$ to $\infty$. In doing so, the eigenvalue spectrum of the volume operator for a given Fock state is ensured to start from zero. Since the spectrum is additive it only increases with each region $R_{t_k}$.
\startNewParagraph
As a result, one sees a similar behavior when compared to the area operator. For each component of the graph $\gamma \hookrightarrow \spatialSubMfd$, the eigenvalue spectrum of the volume operator increases to a certain value and remains constant at that value until the foliation surfaces $R_t$ encounter another component. It is not so clear how the structure of the geometry can be inferred in this case. 

\subsubsection{Imposing diffeomorphism invariance}
\noindent

\noindent  A more careful analysis would impose diffeomorphism invariance appropriately, a task which is not straightforward. There are several ways to define such area and volume operators on $\hDiff$. One way to do so is to  make use of scalar reference fields permeating spacetime \cite{Brown:1994py,Rovelli:1993bm,Giesel:2012rb,Lewandowski:2015xqa,Giesel:2018opa,Giesel:2020xnb}.  Surfaces could be defined as surface at which a reference field takes a certain prescribed value. It is possible that an analysis based on such diffeomorphism invariantly defined foliations would lead to different conclusions than our analysis based on diffeomorphism non-invariant states.

\section{Discussion and outlook}
In this work we have shown that under certain assumptions (see Assumptions \ref{AssumptionsForTheFockSpaceIsomorphism}) the diffeomorphism invariant Hilbert space $\hDiff$ of LQG is isomorphic to a symmetric Fock space $\fockSpace$ which brings to light the graph component dependent structure of $\hDiff$. This was done by first choosing $\spatialSubMfd = \mathbb{R}^{3}$, which then enables us to decompose any given graph $\gamma \hookrightarrow \spatialSubMfd$ to its unlinked components $c_{k} \in \Comp{\gamma}$. After mapping the action of the  group of graph symmetries for such graphs to the symmetric group $\mathrm{S}_n$, $\graphSymmetries_{\gamma}$ was then decomposed into the subgroup $\mathrm{H}_{\gamma}$ permuting diffeomorphic graph components, and its cosets $\widetilde{\graphSymmetries}_{\gamma}$. Using these structures, a simple formula for the inner product on $\hDiff$ was given in lemma \ref{lemma_resolutionOfDiffInvInnerProduct}. A Fock space $\fockSpace$, which we called the component Fock space, was constructed over $\eta (\hilbertSpaceSub{1})$. To obtain an isomorphism between $\hDiff$ and the Fock space $\fockSpace$, the latter had to be a symmetric Fock space. This fits well with the bosonic nature of the gravitation field. Investigating the inner product on $\fockSpace$ facilitated the construction of an explicit isomorphism $\fockIsoMap$, 
\begin{align}
\fockIsoMap :  \hDiff &\longrightarrow \fockSpace, \\
\diffState{\eta(\Psi_{\gamma})} &\longmapsto 
    \frac{c_\gamma}{n!} 
    \sum_{\sigma\in\mathrm{S}_n}\sum_{\left\lbrace I\right\rbrace} \psi_{\left\lbrace I\right\rbrace} \bigotimes_{l=1}^{n} \left| \eta(\snf_{\gamma_{\sigma (l)}}^{I_{\sigma (l)}})\right), \qquad c_\gamma = \sqrt{\frac{n!}{m_1 ! \cdots m_k !}}
\end{align}
thus establishing the isomorphism between the full diffeomorphism invariant Hilbert space of LQG and $\fockSpace$.
We note again that this map was obtained by making assumptions on the class of diffeomorphims at hand. In which categories (smooth, semianalytic, \ldots) these assumptions are justified is a question that remains to be answered.
\startNewParagraph
We have then touched upon some examples of the application of such a structure. Namely, we showed that states with interesting properties, such as states which do not fall within the domain of the volume operator, or a certain type of coherent state can be easily constructed using the Fock structure $\fockSpace$. This was facilitated by making use of the creation and annihilation operators on $\fockSpace$. 
\startNewParagraph
Additionally, we demonstrated that the coherent states constructed using the Fock structure have a close connection to some of the condensate states obtained in GFT.  For concreteness we considered a two-particle condensate in GFT and exhibited a Fock coherent state with the same properties. It is clear that this correspondence will continue to $n$-particle condensates. We note that not all states in the two approaches have been studied. Since the GFT Fock space contains vertices with many open legs as one particle states, there are states and operations that can not easily be translated to $\fockSpace$. A closer inspection of the correspondence between GFT states and those in $\fockSpace$ would be desirable. 
\startNewParagraph
Lastly, we considered the physical interpretations of these Fock states by studying their geometric structure. This was done by considering candidate geometric observables associated to foliations of $\spatialSubMfd$ and their qualitative behaviour when moving through the leaves of the foliation, such as the area of the leaves and the enclosed volume. The same observables were considered both, in some classical geometries and in single- and multi-particle Fock states. 
From these considerations come some hints that the states based on a graph with several unlinked components can be understood as describing geometries with separate, disconnected components, rather than the geometry of one connected manifold.  
However, the discussion was heuristic and did not include the imposition of diffeomorphism invariance. Whether or not the results are affected by that is to be seen. 
\startNewParagraph
It would be of interest to address the open questions posed by the results we reported. For one thing, the assumptions we made could be formally checked in the semianalytic category. Perhaps it is also possible to weaken the assumptions somewhat, while still maintaining the results we have presented. Secondly, the connection to GFT should be studied more fully and more carefully. The same is true for the geometric interpretation of the multi-particle states, as this also has relevance for the corresponding states in GFT. 
Lastly, the Hamilton constraint has been completely ignored in this work. It would be interesting to understand if the Fock structure can be used for the construction of solutions. 
\startNewParagraph
Irrespective of the possible limitations and open questions, the results obtained stand to prove that one can have a symmetric Fock structure, consistent with gravity being a bosonic theory, that is naturally arising in the context of LQG at the $\hDiff$ level and which brings out a multi-particle picture in LQG. This sheds new light on the geometry and structure of the states in $\hDiff$, rigorous connection to GFT, and new tools by using the familiar language of Fock spaces often seen in QFT now in the context of LQG.


\ack
H.S. acknowledges the contribution of the COST Action CA18108.


\section*{References}

\end{document}